\newtheorem{thm}{Theorem} % change to \newtheorem{thm}{Theorem}[section] to number theorems by section
\newtheorem{prop}[thm]{Proposition}
\newcommand{\bra}[1]{\langle #1 |}
\newcommand{\ket}[1]{| #1 \rangle}
\newcommand{\ketbra}[2]{| #1 \rangle\langle #2 |}
\newcommand{\Tr}{{\rm Tr}}
\newcommand{\bb}[1]{\mathbb{#1}}
\newcommand{\cl}[1]{\mathcal{#1}}
\newcommand{\mnorm}[1]{%
\left\vert\kern-0.9pt\left\vert\kern-0.9pt\left\vert #1
\right\vert\kern-0.9pt\right\vert\kern-0.9pt\right\vert}
\newcommand{\bmnorm}[1]{%
\big\vert\kern-0.9pt\big\vert\kern-0.9pt\big\vert #1
\big\vert\kern-0.9pt\big\vert\kern-0.9pt\big\vert}
\begin{document}

\begin{frontmatter}

%% Title, authors and addresses

%% use the tnoteref command within \title for footnotes;
%% use the tnotetext command for the associated footnote;
%% use the fnref command within \author or \address for footnotes;
%% use the fntext command for the associated footnote;
%% use the corref command within \author for corresponding author footnotes;
%% use the cortext command for the associated footnote;
%% use the ead command for the email address,
%% and the form \ead[url] for the home page:

\title{Duality of Entanglement Norms}

\author[UG,IQC]{Nathaniel Johnston}
\ead{nathaniel.johnston@uwaterloo.ca}
\author[UG,IQC]{David W. Kribs}
\ead{dkribs@uoguelph.ca}

\address[UG]{Department of Mathematics \& Statistics, University of Guelph, Guelph, Ontario N1G~2W1, Canada}
\address[IQC]{Institute for Quantum Computing, University of Waterloo, Waterloo, Ontario N2L~3G1, Canada}

\begin{abstract}
	We consider four norms on tensor product spaces that have appeared in quantum information theory and demonstrate duality relationships between them. We show that the product numerical radius is dual to the robustness of entanglement, and we similarly show that the $S(k)$-norm is dual to the projective tensor norm. We show that, analogous to how the product numerical radius and the $S(k)$-norm characterize $k$-block positivity of operators, there is a natural version of the projective tensor norm that characterizes Schmidt number. In this way we obtain an elementary new proof of the cross norm criterion for separability, and we also generalize both the cross norm and realignment criteria to the case of arbitrary Schmidt number.
\end{abstract}

\begin{keyword}
norms \sep duality \sep quantum entanglement \sep Schmidt number

\MSC 81P40 \sep 15A60 \sep 47A30 \sep 47A80

\end{keyword}

\end{frontmatter}

%%%%%%%%%%%%%%%%%%%%%%%%%%%%%%%%%%%%%%%%
\section{Introduction}
%%%%%%%%%%%%%%%%%%%%%%%%%%%%%%%%%%%%%%%%

In quantum information theory, the Schmidt rank and Schmidt number are some of the most basic measures of entanglement. The Schmidt rank gives a rough measure of the amount of entanglement contained in a pure state, and Schmidt number is the natural generalization that describes the amount of entanglement within an arbitrary (i.e., potentially mixed) quantum state \cite{TH00}.

Several norms and measures based on the Schmidt rank and Schmidt number have been defined. In this paper, we focus in particular on four such norms: the \emph{product numerical radius} \cite{GPMSZ10,PGMSCZ11}, the \emph{$S(k)$-norm} \cite{JK10,JK11}, the \emph{projective tensor norm} \cite{R00,Rud05}, and the \emph{robustness of entanglement} \cite{VT99}. We show that these four norms are all closely-related in many ways. Most notably, the product numerical radius is dual to the robustness of entanglement and the $S(k)$-norm is dual to the projective tensor norm.

These duality relationships allow us to construct new bounds on these norms, and we are able to generalize and provide succinct new proofs of many known results as well. We also get several results for free as a result of this duality, such as the structure of the isometry group of the projective tensor norm.

The remainder of this paper is organized as follows. In Section~\ref{sec:dual_norms}, we present the mathematical background that is relevant for our work. In particular, we introduce dual norms, discuss several special cases in which the characterization of a norm's dual is known, and present our main result, which gives an alternate characterization of dual norms in general. In Section~\ref{sec:quantum}, we introduce the basics of quantum information theory and quantum entanglement. We also present and motivate the four entanglement norms that we consider throughout the remainder of the paper, and discuss their basic properties. We then apply our general duality result to those entanglement norms in Section~\ref{sec:dual_ent_norms}, and use their duality to present an elementary proof of a generalization of the cross norm criterion for entanglement.

The remaining sections are devoted to presenting other applications of the duality of the entanglement norms. In Section~\ref{sec:pure_states} we use duality to compute the value of these norms on pure states (i.e., rank-$1$ operators). In Section~\ref{sec:isometry}, we use duality to determine the structure of the isometry group of one of the norms under consideration. Finally, in Section~\ref{sec:realignment}, we use duality to generalize the realignment criterion \cite{CW03,R03} and the filter covariance matrix criterion \cite{GGHE08} to arbitrary Schmidt number.

%%%%%%%%%%%%%%%%%%%%%%%%%%%%%%%%%%%%%%%%
\section{Norms and Dual Norms}\label{sec:dual_norms}
%%%%%%%%%%%%%%%%%%%%%%%%%%%%%%%%%%%%%%%%

We use $\cl{H}$ to denote a finite-dimensional Hilbert space over the field $\bb{F}$ of real or complex numbers ($\bb{R}$ or $\bb{C}$, respectively). We typically represent vectors $\mathbf{v} \in \cl{H}$ using boldface, but if we wish to emphasize that the vector in question has unit length (with respect to the norm induced by the inner product), then we use ``ket'' notation: $\ket{v} \in \cl{H}$. In this case, we use ``bras'' to represent dual (i.e., row) vectors: $\bra{v} := \ket{v}^\dagger$.

Given a norm $\mnorm{\cdot}$ on $\cl{H}$ (not necessarily equal to the norm induced by the inner product), the \emph{dual norm} of $\mnorm{\cdot}$ is defined by
\begin{align}\label{eq:dual_defn}
	\mnorm{\mathbf{v}}^\circ := \sup\Big\{ \big| \langle \mathbf{w}, \mathbf{v} \rangle \big| : \mnorm{\mathbf{w}} \leq 1 \Big\}.
\end{align}

For example, the dual of the vector $p$-norm is the vector $q$-norm, where $1/p + 1/q = 1$. To see some other well-known and useful duality relations, let $\cl{H} = M_n$, the space of $n \times n$ complex matrices with the Hilbert--Schmidt inner product $\langle A, B \rangle := \Tr(A^\dagger B)$ (here $A^\dagger$ denotes the conjugate transpose of $A$). The \emph{Frobenius norm} is the norm induced by this inner product and is thus self-dual: $\|X\|_F := \sqrt{\Tr(X^\dagger X)}$. The \emph{operator norm} and the \emph{trace norm} on $M_n$ are defined as follows:
\begin{align*}
	\big\|X\big\| & := \sup\Big\{ \big|\bra{v}X\ket{w}\big| \Big\} \ \text{ and } \	\big\|X\big\|_{tr} := \sup\Big\{ \big|\Tr(XU)\big| : \text{ $U \in M_n$ is unitary} \Big\}.
\end{align*}
It is well-known that the operator norm and the trace norm are dual to each other: $\|\cdot\|^\circ = \|\cdot\|_{tr}$. More generally, we have the following useful result, which follows immediately from \cite[Theorem~3.3]{MF85}.
\begin{thm}\label{thm:2k_dual}
	Let $X \in M_{n,m}$ have singular values $\sigma_1 \geq \sigma_2 \geq \cdots \geq \sigma_{\min\{m,n\}} \geq 0$ and define the $(k,2)$-norm of $X$ as follows:
	\begin{align*}
		\big\|X\big\|_{(k,2)} := \sup\Big\{ \big| \Tr(XY) \big| : {\rm rank}(Y) \leq k, \big\|Y\big\|_F \leq 1 \Big\} = \sqrt{\sum_{i=1}^k \sigma_i^2}.
	\end{align*}
	Let $r$ be the largest index $1 \leq r < k$ such that $\sigma_r > \sum_{i=r+1}^{\min\{m,n\}}\sigma_{i}/(k-r)$ (take $r = 0$ if no such index exists or if $k = 1$). Also define $\tilde{\sigma} := \sum_{i=r+1}^{\min\{m,n\}}\sigma_i/(k-r)$. Then	
	\begin{align*}
		\big\|X\big\|_{(k,2)}^\circ = \sqrt{\sum_{i=1}^r \sigma_i^2 + (k - r)\tilde{\sigma}^2}.
	\end{align*}
\end{thm}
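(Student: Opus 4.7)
The plan is to reduce the dual-norm computation to a finite-dimensional optimization over singular values. Both $\|\cdot\|_{(k,2)}$ and its dual are unitarily invariant: unitary congruences preserve rank, Frobenius norm, and trace, so $\|UXV^*\|_{(k,2)} = \|X\|_{(k,2)}$ for any unitaries $U,V$, and duality preserves unitary invariance. Consequently both norms depend on $X$ only through its singular values. The first equality, $\|X\|_{(k,2)} = \sqrt{\sum_{i=1}^k \sigma_i^2}$, follows from the von Neumann trace inequality $|\Tr(XY)| \leq \sum_i \sigma_i(X)\sigma_i(Y)$ combined with Cauchy--Schwarz on the top $k$ singular values, with the supremum attained by $Y = VDU^\dagger/\sqrt{\sum_{i=1}^k \sigma_i^2}$, where $X = U\Sigma V^\dagger$ is an SVD and $D = \mathrm{diag}(\sigma_1,\ldots,\sigma_k,0,\ldots,0)$.

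For the dual norm, setting $s_i := \sigma_i(Y)$ and invoking von Neumann once more reduces the problem to
\[
\|X\|_{(k,2)}^\circ = \sup\Bigl\{\sum_i \sigma_i s_i \ : \ s_1 \geq s_2 \geq \cdots \geq 0,\ \sum_{i=1}^k s_i^2 \leq 1\Bigr\}.
\]
The crucial observation is that for $i>k$, the variable $s_i$ contributes to the objective with coefficient $\sigma_i \geq 0$, but enters the constraints only through the ordering $s_i \leq s_k$; so at any optimum we must have $s_i = s_k$ for all $i>k$. Introducing a split index $r \in \{0,1,\ldots,k-1\}$ with $s_1 \geq \cdots \geq s_r > s_{r+1} = \cdots = s_{\min\{m,n\}} =: t$, the problem becomes: maximize $\sum_{i=1}^r \sigma_i s_i + tS$ subject to $\sum_{i=1}^r s_i^2 + (k-r)t^2 \leq 1$, where $S := \sum_{i=r+1}^{\min\{m,n\}}\sigma_i = (k-r)\tilde\sigma$. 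A Cauchy--Schwarz argument applied to the pair of vectors $(\sigma_1,\ldots,\sigma_r,\,S/\sqrt{k-r})$ and $(s_1,\ldots,s_r,\,\sqrt{k-r}\,t)$ yields the upper bound $\sqrt{\sum_{i=1}^r \sigma_i^2 + (k-r)\tilde\sigma^2}$, attained when $s_i \propto \sigma_i$ for $i \leq r$ and $t \propto \tilde\sigma$.

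It remains to identify the optimal split $r$. The Cauchy--Schwarz equality configuration is feasible (i.e., sorted with $s_r > t$) precisely when $\sigma_r > \tilde\sigma$, so $r$ must belong to the set of indices satisfying this strict inequality. A short algebraic computation shows that the bound $\sqrt{\sum_{i=1}^r \sigma_i^2 + (k-r)\tilde\sigma^2}$ is monotonically nondecreasing in $r$ — the difference between consecutive values simplifies to $((k-r)\sigma_{r+1} - S_r)^2 / ((k-r)(k-r-1))$ — so the maximum over feasible splits is attained at the largest feasible $r$, matching the statement. The principal obstacle is managing this bookkeeping cleanly and verifying that when no feasible $r \geq 1$ exists the formula correctly degenerates to $r=0$ (giving $\sqrt{k}\,\tilde\sigma$); alternatively, one can bypass these computations by recognizing the result as a direct specialization of the Mirsky--Farahat dual-norm formula \cite[Thm.\,3.3]{MF85}, as flagged in the statement.
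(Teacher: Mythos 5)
The paper itself offers no proof of this theorem --- it is stated as following ``immediately from Theorem~3.3'' of the Mirsky--Farahat reference [MF85] --- so your self-contained optimization argument is a genuinely different (and more informative) route. Most of it is sound: the reduction to singular values via unitary invariance and von Neumann's trace inequality, the evaluation $\|X\|_{(k,2)}=\sqrt{\sum_{i=1}^k\sigma_i^2}$, the observation that an optimizer of the dual problem must have $s_i=s_k$ for $i>k$, and the telescoping identity $f(r+1)-f(r)=((k-r)\sigma_{r+1}-S_r)^2/((k-r)(k-r-1))$ for $f(r):=\sum_{i\le r}\sigma_i^2+(k-r)\tilde\sigma_r^2$ all check out. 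In particular the lower bound $\|X\|_{(k,2)}^\circ\ge\sqrt{f(r)}$ for every \emph{feasible} split $r$ (one whose Cauchy--Schwarz equality point respects the ordering), together with monotonicity, correctly gives attainment at the largest feasible index $r^*$.

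The gap is in the upper bound. Your identity shows $f$ is nondecreasing over \emph{all} $r\in\{0,\dots,k-1\}$, not just the feasible ones, so a configuration whose split index is an infeasible $r>r^*$ is only bounded by $\sqrt{f(r)}$, which can strictly exceed the claimed answer $\sqrt{f(r^*)}$. Concretely, take $k=2$ and $\sigma=(1,1,1)$: then $r^*=0$, $\tilde\sigma=3/2$, and the theorem asserts the value $3/\sqrt{2}\approx 2.121$, yet a point with $s_1>s_2=s_3$ has split $r=1$ and your argument only bounds it by $\sqrt{f(1)}=\sqrt{5}\approx 2.236$. Noting that the Cauchy--Schwarz equality configuration is infeasible shows the bound $\sqrt{f(r)}$ is not \emph{attained}, but not that the ordering-constrained supremum drops all the way down to $\sqrt{f(r^*)}$. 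To close this you need an additional argument --- e.g.\ a KKT or pool-adjacent-violators analysis showing that when the equality configuration for split $r$ violates $s_r\ge t$, the constrained maximum is achieved on the face $s_r=t$, which is precisely the set of split-$(r-1)$ configurations, so the suprema for splits $r$ and $r-1$ coincide; iterating, and using that the feasible indices form a downward-closed set (if $\sigma_r\le\tilde\sigma_r$ then $\sigma_{r+1}\le\tilde\sigma_r\le\tilde\sigma_{r+1}$), collapses everything to $r^*$. With that lemma supplied your proof is complete; alternatively, the citation to [MF85] that both you and the paper invoke does the whole job.
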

In particular, the duality of the operator and trace norms arises in the $k = 1$ case of Theorem~\ref{thm:2k_dual}.

We now present some general properties of dual norms that will be of use to us. For further properties of dual norms, the interested reader is directed to any number of other sources, including \cite{Bha97,BV04,HJ85}. Let $\mnorm{\cdot}_1$ and $\mnorm{\cdot}_2$ be two norms on $\cl{H}$. Then:
\begin{align}
	\label{eq:dual_ineq}\mnorm{\cdot}_1 \leq \mnorm{\cdot}_2 \ \ \ \Longleftrightarrow & \ \ \ \mnorm{\cdot}_2^\circ \leq \mnorm{\cdot}_1^\circ, \\
	\label{eq:dual_isom}\bmnorm{A\ket{v}}_1 = \bmnorm{\ket{v}}_1 \ \ \forall \, \ket{v} \in \cl{H} \ \ \ \Longleftrightarrow & \ \ \ \bmnorm{A^\dagger\ket{v}}_1^\circ = \bmnorm{\ket{v}}_1^\circ \ \ \forall \, \ket{v} \in \cl{H}.
\end{align}

The following result allows us to rephrase dual norms, which so far we have written as the supremum~\eqref{eq:dual_defn}, as an infimum. We expect that this result is known, though we have not been able to find a reference for it.
\begin{thm}\label{thm:dual_charac_general}
	Let $\cl{S} \subseteq \cl{H}$ be a bounded set satisfying ${\rm span}(\cl{S}) = \cl{H}$ and define a norm $\mnorm{\cdot}$ by
	\begin{align*}
		\mnorm{\mathbf{v}} := \sup_{\mathbf{w} \in \cl{S}}\Big\{ \big| \langle \mathbf{v}, \mathbf{w} \rangle \big| \Big\}.
	\end{align*}
	Then $\mnorm{\cdot}^\circ$ is given by
	\begin{align*}
		\mnorm{\mathbf{v}}^\circ = \inf\Big\{ \sum_i |c_i| : \mathbf{v} = \sum_i c_i \mathbf{v}_i, \text{where } c_i \in \bb{F}, \mathbf{v}_i \in \cl{S} \ \forall \, i \Big\},
	\end{align*}
	where the infimum is taken over all such decompositions of $\mathbf{v}$.
\end{thm}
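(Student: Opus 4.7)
The plan is to introduce the candidate
$$N(\mathbf{v}) := \inf\Big\{\sum_i |c_i| : \mathbf{v} = \sum_i c_i \mathbf{v}_i, \ \mathbf{v}_i \in \cl{S}\Big\}$$
and establish two facts: (i) $N$ is a genuine norm on $\cl{H}$; and (ii) its dual satisfies $N^\circ = \mnorm{\cdot}$. The conclusion $\mnorm{\cdot}^\circ = N$ then follows from finite-dimensional reflexivity $\mnorm{\cdot}^{\circ\circ} = \mnorm{\cdot}$. Conceptually, this is the standard polarity duality between a set $\cl{S}$ and its absolutely convex hull, packaged in terms of norms.

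For step (i), the spanning hypothesis ${\rm span}(\cl{S}) = \cl{H}$ guarantees that every $\mathbf{v} \in \cl{H}$ admits at least one decomposition of the prescribed form, so $N$ takes finite nonnegative values; absolute homogeneity and subadditivity follow by rescaling and concatenating decompositions. The nontrivial point is positive-definiteness, and this is exactly where the boundedness of $\cl{S}$ enters: setting $M := \sup_{\mathbf{w} \in \cl{S}} \|\mathbf{w}\|$ for the inner-product norm, every decomposition of $\mathbf{v}$ satisfies $\|\mathbf{v}\| \leq M \sum_i |c_i|$, whence $\|\mathbf{v}\| \leq M\, N(\mathbf{v})$ and $N(\mathbf{v}) = 0$ forces $\mathbf{v} = \mathbf{0}$.

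For step (ii), I would sandwich $N^\circ(\mathbf{w})$ between $\mnorm{\mathbf{w}}$ on both sides. The inequality $\mnorm{\mathbf{w}} \leq N^\circ(\mathbf{w})$ is immediate because each $\mathbf{v} \in \cl{S}$ satisfies $N(\mathbf{v}) \leq 1$ via the trivial one-term decomposition, so $\mnorm{\mathbf{w}}$ is a supremum over a subset of the set defining $N^\circ(\mathbf{w})$. For the reverse inequality, given any $\mathbf{v}$ with $N(\mathbf{v}) \leq 1$ and any $\varepsilon > 0$, pick a decomposition $\mathbf{v} = \sum_i c_i \mathbf{v}_i$ with $\mathbf{v}_i \in \cl{S}$ and $\sum_i |c_i| \leq 1+\varepsilon$; the triangle inequality gives
$$|\langle \mathbf{w}, \mathbf{v}\rangle| \leq \sum_i |c_i|\, |\langle \mathbf{w}, \mathbf{v}_i\rangle| \leq (1+\varepsilon)\mnorm{\mathbf{w}}.$$
Taking the supremum over admissible $\mathbf{v}$ and letting $\varepsilon \to 0$ yields $N^\circ(\mathbf{w}) \leq \mnorm{\mathbf{w}}$.

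No individual step is technically difficult, but the main subtlety worth flagging is ensuring that $N$ is a bona fide norm rather than merely an extended gauge. The two hypotheses in the theorem play distinct roles here: $\cl{S}$ spanning makes $N$ finite-valued everywhere, while $\cl{S}$ bounded makes $N$ strictly positive on nonzero vectors; dropping either condition would break the argument, so the statement is essentially best-possible in this respect.
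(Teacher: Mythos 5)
Your proof is correct. It takes a mildly different route from the paper's: the paper shows that $\mnorm{\cdot}^\circ$ and the infimum expression are each the \emph{largest} norm on $\cl{H}$ taking values at most $1$ on $\cl{S}$, and concludes equality from this extremal characterization; you instead compute the dual of the infimum norm $N$ directly, sandwiching $N^\circ$ between $\mnorm{\cdot}$ on both sides, and then invoke finite-dimensional reflexivity $N^{\circ\circ} = N$. The elementary ingredients are the same in both arguments (the trivial one-term decomposition gives $N \leq 1$ on $\cl{S}$; the triangle inequality applied to a near-optimal decomposition gives the other bound), and both ultimately rest on biduality --- the paper's reliance is hidden in its use of Property~\eqref{eq:dual_ineq}, whose reverse implication requires the bipolar principle, whereas yours is explicit. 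One point where your write-up is more careful: you verify positive-definiteness of $N$ via the bound $\|\mathbf{v}\| \leq M\, N(\mathbf{v})$ with $M = \sup_{\mathbf{w}\in\cl{S}}\|\mathbf{w}\|$, which is the only place boundedness of $\cl{S}$ is genuinely needed on the infimum side; the paper dismisses everything but the triangle inequality as trivial. Your closing remark that the two hypotheses play swapped roles for $N$ compared with $\mnorm{\cdot}$ (spanning gives finiteness of $N$, boundedness gives its definiteness) is accurate and complements the paper's own commentary preceding its proof.
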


Before proving the result, we make three observations. Firstly, the conditions placed on $\cl{S}$ by Theorem~\ref{thm:dual_charac_general} are both necessary and sufficient for the quantity $\mnorm{\cdot}$ to be a norm: boundedness of $\cl{S}$ ensures that the supremum is finite, and ${\rm span}(\cl{S}) = \cl{H}$ is equivalent to the statement that $\mnorm{\mathbf{v}} = 0$ if and only if $\mathbf{v} = 0$. Secondly, every norm on $\cl{H}$ can be written in this form: we can always choose $\cl{S}$ to be the unit ball of the dual norm $\mnorm{\cdot}^\circ$. However, there are times when other choices of $\cl{S}$ are more useful. Finally, it is an elementary exercise to show that if $\cl{S}$ is closed then both the supremum and infimum in the statement of the theorem are attained (and thus can be written as a maximum and minimum, respectively).

\begin{proof}[Proof of Theorem~\ref{thm:dual_charac_general}]
	Begin by noting that if $\mathbf{w} \in \cl{S}$ and $\mnorm{\mathbf{v}} \leq 1$ then $| \langle \mathbf{v}, \mathbf{w} \rangle | \leq 1$. It follows that $\mnorm{\mathbf{w}}^{\circ} \leq 1$ whenever $\mathbf{w} \in \cl{S}$. In fact, we now show that $\mnorm{\cdot}^\circ$ is the largest norm on $\cl{H}$ with this property. To this end, let $\mnorm{\cdot}_2$ be another norm satisfying $\mnorm{\mathbf{w}}_{2}^{\circ} \leq 1$ whenever $\mathbf{w} \in \cl{S}$. Then
\begin{align*}
	\mnorm{\mathbf{v}} = \sup_{\mathbf{w} \in \cl{S}} \Big\{ \big| \langle \mathbf{v}, \mathbf{w} \rangle \big| \Big\} \leq \sup_{\mathbf{w}} \Big\{ \big| \langle \mathbf{v}, \mathbf{w} \rangle \big| : \mnorm{\mathbf{w}}_{2}^{\circ} \leq 1 \Big\} = \mnorm{\mathbf{v}}_2.
\end{align*}
Thus $\mnorm{\cdot} \leq \mnorm{\cdot}_2$, so by taking duals and using Property~\eqref{eq:dual_ineq} we see that $\mnorm{\cdot}^\circ \geq \mnorm{\cdot}_2^\circ$, as desired.

For the remainder of the proof, we denote the infimum in the statement of the theorem by $\|\cdot\|_{\textup{inf}}$. Our goal now is to show that: (a) $\|\cdot\|_{\textup{inf}}$ is a norm, (b) $\|\cdot\|_{\textup{inf}}$ satisfies $\|\mathbf{w}\|_{\textup{inf}} \leq 1$ whenever $\mathbf{w} \in \cl{S}$, and (c) $\|\cdot\|_{\textup{inf}}$ is the largest norm satisfying property~(b). The fact that $\|\cdot\|_{\textup{inf}} = \mnorm{\cdot}^\circ$ will then follow from the fact that $\mnorm{\cdot}^\circ$ is also the largest norm satisfying property~(b).

To see (a) (i.e., to prove that $\|\cdot\|_{\textup{inf}}$ is a norm), we only prove the triangle inequality, since the other properties are trivial. Fix $\varepsilon > 0$ and let $\mathbf{v} = \sum_i c_i \mathbf{v}_i$, $\mathbf{w} = \sum_i d_i \mathbf{w}_i$ be decompositions of $\mathbf{v}, \mathbf{w}$ with $\mathbf{v}_i, \mathbf{w}_i \in \cl{S}$ for all $i$, satisfying $\sum_i |c_i| \leq \|\mathbf{v}\|_{\textup{inf}} + \varepsilon$ and $\sum_i |d_i| \leq \|\mathbf{w}\|_{\textup{inf}} + \varepsilon$. Then
\begin{align*}
	\|\mathbf{v} + \mathbf{w}\|_{\textup{inf}} \leq \sum_i |c_i| + \sum_i |d_i| \leq \|\mathbf{v}\|_{\textup{inf}} + \|\mathbf{w}\|_{\textup{inf}} + 2\varepsilon.
\end{align*}
Since $\varepsilon > 0$ is arbitrary, the triangle inequality follows, so $\|\cdot\|_{\textup{inf}}$ is a norm.

To see (b) (i.e., to prove that $\|\mathbf{v}\|_{\textup{inf}} \leq 1$ whenever $\mathbf{v} \in \cl{S}$), we simply write $\mathbf{v}$ in its trivial decomposition $\mathbf{v} = \mathbf{v}$, which gives $\|\mathbf{v}\|_{\textup{inf}} \leq \sum_i c_i = c_1 = 1$.

To see (c) (i.e., to prove that $\|\cdot\|_{\textup{inf}}$ is the largest norm on $\cl{H}$ satisfying condition (b)), begin by letting $\mnorm{\cdot}_2$ be any norm on $\cl{H}$ with the property that $\mnorm{\mathbf{v}}_{2} \leq 1$ for all $\mathbf{v} \in \cl{S}$. Then using the triangle inequality for $\mnorm{\cdot}_2$ shows that if $\mathbf{v} = \sum_i c_i \mathbf{v}_i$ is any decomposition of $\mathbf{v}$ with $\mathbf{v}_i \in \cl{S}$ for all $i$, then
\begin{align*}
	\mnorm{\mathbf{v}}_2 = \mnorm{\sum_i c_i \mathbf{v}_i}_2 \leq \sum_i |c_i| \mnorm{\mathbf{v}_i}_2 = \sum_i |c_i|.
\end{align*}
Taking the infimum over all such decompositions of $\mathbf{v}$ shows that $\mnorm{\mathbf{v}}_2 \leq \|\mathbf{v}\|_{\textup{inf}}$, which completes the proof.
\end{proof}

As an example of an application of Theorem~\ref{thm:dual_charac_general}, we again consider the operator norm and trace norm on $M_n$, which we already noted are dual to each other. The theorem then says that
\begin{align*}
	\big\|X\big\| & = \inf\Big\{ \sum_i |c_i| : X = \sum_i c_i U_i \text{ with each $U_i$ unitary} \Big\}, \text{ and}\\
	\big\|X\big\|_{tr} & = \inf\Big\{ \sum_i |c_i| : X = \sum_i c_i \ketbra{w_i}{v_i} \Big\}.
\end{align*}
The above characterization of $\|\cdot\|_{tr}$ is well-known, and the infimum is attained when we write $X$ in its singular value decomposition. The characterization of $\|\cdot\|$ is perhaps slightly less well-known and interesting in its own right. Theorem~\ref{thm:dual_charac_general} also generalizes the fact that the injective and projective tensor norms are dual to each other (see \cite[Chapter~1]{DGFS08}).

%%%%%%%%%%%%%%%%%%%%%%%%%%%%%%%%%%%%%%%%
\section{Basics of Quantum Entanglement}\label{sec:quantum}
%%%%%%%%%%%%%%%%%%%%%%%%%%%%%%%%%%%%%%%%

Here we introduce our notation and terminology related to quantum entanglement. Our introduction to quantum information and quantum entanglement is quite brief, so the interested reader is directed to other sources such as \cite{BZ06,HHH09,NC00} for a more thorough introduction to the subject.

Throughout this work, we primarily consider three different Hilbert spaces. The first Hilbert space of interest is $\bb{C}^n$: $n$-dimensional complex Euclidean space. The second is $M_n$: the space of $n \times n$ complex matrices, equipped with the Hilbert--Schmidt inner product $\langle A, B \rangle := \Tr(A^\dagger B)$. Finally, the third Hilbert space we consider is $M_n^{H}$: the space of $n \times n$ complex Hermitian (i.e., self-adjoint) matrices, also equipped with the Hilbert--Schmidt inner product. Note that the first two Hilbert spaces are complex, while the third Hilbert space is real. We also consider tensor products of these Hilbert spaces with their natural inner products.

A pure quantum state is represented by a unit vector $\ket{v} \in \bb{C}^n$. A pure state $\ket{v} \in \bb{C}^m \otimes \bb{C}^n$ is called \emph{separable} if it can be written in the form $\ket{v} = \ket{a} \otimes \ket{b}$ for some $\ket{a} \in \bb{C}^m$, $\ket{b} \in \bb{C}^n$, and it is called \emph{entangled} otherwise. The \emph{Schmidt rank} of a pure state $\ket{v}$, which we denote by $SR(\ket{v})$, is the least integer $k$ so that we can write $\ket{v} = \sum_{i=1}^k c_i \ket{v_i}$ with each $\ket{v_i}$ separable. It is the case that $1 \leq SR(\ket{v}) \leq \min\{m,n\}$ for all $\ket{v} \in \bb{C}^m \otimes \bb{C}^n$ and $SR(\ket{v}) = 1$ if and only if $\ket{v}$ is separable. Furthermore, for every quantum state we can find orthonormal sets $\{\ket{a_i}\}\subset \bb{C}^m$ and $\{\ket{b_i}\}\subset \bb{C}^n$ and real positive coefficients (known as \emph{Schmidt coefficients}) $\alpha_1 \geq \alpha_2 \geq \cdots \geq 0$ such that $\ket{v} = \sum_{i=1}^{SR(\ket{v})}\alpha_i \ket{a_i} \otimes \ket{b_i}$ (the Schmidt rank and coefficients can be calculated for a given vector by a simple application of the singular value decomposition).

While pure states are generally rather easy to work with mathematically, not all quantum states are pure. General (i.e., potentially mixed) quantum states are represented by \emph{density matrices}: positive semidefinite matrices $\rho \in M_n^H$ satisfying $\Tr(\rho) = 1$. If $\ket{v}$ represents a pure state then the projection onto its span, $\ketbra{v}{v}$, is its density matrix representation. A general density matrix $\rho$ can be written as a convex combination of pure states: $\rho = \sum_i p_i \ketbra{v_i}{v_i}$ with $\sum_i p_i = 1$ and $p_i \geq 0$ for all $i$. If $\rho$ can be written in this way as a convex combination of separable pure states $\ket{v_i}$ then we say that $\rho$ is \emph{separable} \cite{W89}. More generally, the \emph{Schmidt number} of $\rho$, denoted $SN(\rho)$, is the least integer $k$ such that $\rho$ can be written as a convex combination of pure states $\ket{v_i}$ each with $SR(\ket{v_i}) \leq k$ \cite{TH00}. If $SN(\rho) \geq 2$ then $\rho$ is called \emph{entangled}.

An operator $Y \in (M_m \otimes M_n)^H$ is called \emph{$k$-block positive} if $\bra{v}Y\ket{v} \geq 0$ whenever $SR(\ket{v}) \leq k$. The sets of $k$-block positive operators and states with Schmidt number at most $k$ are dual to each other in the sense that $SN(\rho) \leq k$ if and only if $\Tr(\rho Y) \geq 0$ for all $k$-block positive $Y$ \cite{SSZ09}.

Fix $1 \leq k \leq \min\{m,n\}$. Four norms based on Schmidt rank and Schmidt number, which are the focus of the remainder of this paper, are as follows. In all cases, $X \in M_m \otimes M_n$ and $Y \in (M_m \otimes M_n)^H$.
\begin{align}
	\label{eq:k_rad}r^{\otimes}_k(Y) & := \sup\Big\{ \big| \bra{v}Y\ket{v} \big| : SR(\ket{v}) \leq k \Big\}, \\
	\label{eq:sk_norm}\big\|X\big\|_{S(k)} & := \sup\Big\{ \big| \bra{v}X\ket{w} \big| : SR(\ket{v}), SR(\ket{w}) \leq k \Big\}, \\
	\label{eq:sk_dual}\big\|X\big\|_{\gamma,k} & := \inf\Big\{ \sum_i |c_i| : X = \sum_i c_i\ketbra{v_i}{w_i} \text{ with } SR(\ket{v_i}),SR(\ket{w_i}) \leq k \ \forall \, i \Big\}, \text{ and} \\
	\label{eq:k_robust}R_k(Y) & := \inf\big\{ c_1 + c_2 : Y = c_1\rho_1 - c_2\rho_2 \text{ with } c_1,c_2\geq 0, SN(\rho_1),SN(\rho_2) \leq k \big\},
\end{align}
where the suprema~\eqref{eq:k_rad} and~\eqref{eq:sk_norm} are taken over all $\ket{v},\ket{w}$ and the infima~\eqref{eq:sk_dual} and~\eqref{eq:k_robust} are taken over all decompositions of the indicated form.

The norms~\eqref{eq:k_rad} and~\eqref{eq:sk_norm} can be thought of as ``$k$-local'' versions of the operator norm, and similarly the norms~\eqref{eq:sk_dual} and~\eqref{eq:k_robust} are analogous to the trace norm. In particular, in the $k = \min\{m,n\}$ case we have
\begin{align*}
	\big\|X\big\|_{S(\min\{m,n\})} = \big\|X\big\| \ \ \text{and} \ \ \big\|X\big\|_{\gamma,\min\{m,n\}} = \big\|X\big\|_{tr}.
\end{align*}
We similarly have $r^{\otimes}_{\min\{m,n\}}(Y) = \|Y\|$ and $R_{\min\{m,n\}}(Y) = \|Y\|_{tr}$ in the case when $Y$ is Hermitian. Even though the norm $r^{\otimes}_k$ can easily be defined on all of $M_m \otimes M_n$, it is more natural for us to restrict it to Hermitian operators. Furthermore, $R_k$ is a norm only on $(M_m \otimes M_n)^H$, since $c_1\rho_1 - c_2\rho_2$ is always Hermitian. The fact that every Hermitian operator $Y$ can be written in this form follows from noting that in a sum of Hermitian elementary tensors, each Hermitian matrix can be written as the difference of positive and negative parts, and terms can be regrouped to to write $Y = P - N$, where each of $P$ and $N$ is separable.

The norm~\eqref{eq:k_rad} was introduced and studied in \cite{GPMSZ10,PGMSCZ11} in the $k = 1$ case, where it was called the \emph{product numerical radius}. The norm~\eqref{eq:sk_norm} was introduced in \cite{JK10,JK11} and we have $\big\|X\big\|_{S(k)} = r^{\otimes}_{k}(X)$ when $X$ is positive semidefinite \cite[Proposition~4.5]{JK10}. The norm~\eqref{eq:sk_dual}, in the $k = 1$ case, was studied in relation to quantum entanglement in \cite{R00,Rud05} and is called the \emph{projective tensor norm}. Observe in this case that it can be written in the following slightly simpler form:
\begin{align*}
	\big\|X\big\|_{\gamma,1} & = \inf\Big\{ \sum_i |c_i| : X = \sum_i c_i\ketbra{v_i}{w_i} \otimes \ketbra{x_i}{y_i} \Big\} \\
	& = \inf\Big\{ \sum_i \big\|A_i\big\|_{tr} \big\|B_i\big\|_{tr} : Y = \sum_i A_i \otimes B_i \Big\}.
\end{align*}
Finally, the norm~\eqref{eq:k_robust} was also studied in the $k = 1$ case in \cite{Rud05}. It is easily-verified that for density matrices we have $R_k(\rho) = 2E_{R,k}(\rho) + 1$, where
\begin{align*}
	E_{R,k}(\rho) := \inf\Big\{ s : SN(\rho + s\sigma) \leq k, SN(\sigma) \leq k \Big\}.
\end{align*}
In the $k = 1$ case, $E_{R,1}$ is called the \emph{robustness of entanglement} \cite{VT99}, which represents the least amount of separable noise that can be added to a state to destroy its entanglement.

The norms~\eqref{eq:k_rad} and~\eqref{eq:sk_norm} are in some sense the natural norms to use when dealing with $k$-block positivity, when using the Hilbert space $(M_m \otimes M_n)^H$ or $M_m \otimes M_n$ respectively, as motivated by the following proposition.
\begin{prop}\label{prop:k_block_pos}
	Let $Y \in (M_m \otimes M_n)^H$. If we write $Y = cI - X$ with $X$ positive semidefinite, then the following are equivalent:
	\begin{enumerate}[(a)]
		\item $Y$ is $k$-block positive;
		\item $c \geq \|X\|_{S(k)}$; and
		\item $c \geq r^{\otimes}_k(X)$.
	\end{enumerate}
\end{prop}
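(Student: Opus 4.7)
The plan is to prove the equivalences via the chain (a) $\iff$ (c) $\iff$ (b), where the first equivalence follows directly from unwinding definitions and the second is essentially a citation to the previously noted identity $\|X\|_{S(k)} = r^{\otimes}_k(X)$ for positive semidefinite $X$.

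For (a) $\iff$ (c), I would start from an arbitrary unit vector $\ket{v}$ with $SR(\ket{v}) \leq k$ and compute
\begin{align*}
    \bra{v}Y\ket{v} = c\braket{v}{v} - \bra{v}X\ket{v} = c - \bra{v}X\ket{v}.
\end{align*}
By definition, $Y$ is $k$-block positive precisely when the left-hand side is nonnegative for every such $\ket{v}$, which is equivalent to
\begin{align*}
    c \geq \sup\Big\{ \bra{v}X\ket{v} : SR(\ket{v}) \leq k \Big\}.
\end{align*}
Now I would use the hypothesis that $X$ is positive semidefinite: this forces $\bra{v}X\ket{v} \geq 0$ for all $\ket{v}$, so the absolute value in the definition~\eqref{eq:k_rad} of $r^{\otimes}_k$ is redundant when applied to $X$, and the supremum above is exactly $r^{\otimes}_k(X)$. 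This gives (a) $\iff$ (c).

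For (c) $\iff$ (b), I would simply invoke the already-recorded identity $\|X\|_{S(k)} = r^{\otimes}_k(X)$ for positive semidefinite $X$ (see \cite[Proposition~4.5]{JK10}, cited in the discussion following~\eqref{eq:k_robust}). Since this identity makes conditions (b) and (c) literally the same inequality, the equivalence is immediate.

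There is no serious obstacle here; the argument is essentially a bookkeeping exercise built on the observation that positive semidefiniteness of $X$ collapses the definition of $r^{\otimes}_k(X)$ to a supremum of nonnegative quantities, which is exactly what the block-positivity condition demands of $c$. The only nontrivial ingredient is the external fact $\|X\|_{S(k)} = r^{\otimes}_k(X)$ for PSD $X$, and this is quoted rather than reproved.
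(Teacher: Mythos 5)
Your argument is correct and is exactly the ``trivial'' unwinding of definitions that the paper itself omits (it simply cites \cite[Corollary~4.9]{JK10}): expanding $\bra{v}Y\ket{v} = c - \bra{v}X\ket{v}$ over unit vectors with $SR(\ket{v}) \leq k$, noting positive semidefiniteness makes the absolute value in $r^{\otimes}_k$ redundant, and quoting $\|X\|_{S(k)} = r^{\otimes}_k(X)$ for PSD $X$. Nothing is missing.
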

\begin{proof}
	The proof is trivial and thus omitted -- see \cite[Corollary~4.9]{JK10}.
\end{proof}

Since the set of $k$-block positive operators is dual to the set of states with Schmidt number no larger than $k$, we might na\"{i}vely expect that the dual norms of $r^{\otimes}_k$ and $\|\cdot\|_{S(k)}$ similarly characterize Schmidt number. We show in the next section that their dual norms are $R_k$ and $\|\cdot\|_{\gamma,k}$, respectively, and that these norms do indeed characterize Schmidt number.

%%%%%%%%%%%%%%%%%%%%%%%%%%%%%%%%%%%%%%%%
\section{Duality of Schmidt Rank Norms}\label{sec:dual_ent_norms}
%%%%%%%%%%%%%%%%%%%%%%%%%%%%%%%%%%%%%%%%

We begin by showing that the norms~\eqref{eq:k_rad} and~\eqref{eq:k_robust} are dual to each other, and that the norms~\eqref{eq:sk_norm} and~\eqref{eq:sk_dual} are dual to each other.
\begin{thm}\label{thm:dual_norm_sk}
	Let $X \in M_m \otimes M_n$ and $Y \in (M_m \otimes M_n)^H$. Then
	\begin{align*}
		\big\| X \big\|_{S(k)}^{\circ} & = \big\|X\big\|_{\gamma,k} \ \ \text{ and } \ \ r^{\otimes}_k(Y)^\circ = R_k(Y).
	\end{align*}
\end{thm}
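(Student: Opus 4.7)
The plan is to apply Theorem~\ref{thm:dual_charac_general} twice—once with a generating set that produces $\|\cdot\|_{S(k)}$ and once with a generating set that produces $r_k^\otimes$—so that each duality statement falls out of that theorem after identifying the right set $\cl{S}$ and performing a short bookkeeping step.

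Using the Hilbert--Schmidt inner product, rewrite $\bra{v}X\ket{w} = \langle \ket{v}\bra{w}, X\rangle$ and $\bra{v}Y\ket{v} = \langle \ket{v}\bra{v}, Y\rangle$. This suggests the natural generating sets
\[
\cl{S}_1 := \big\{\ket{v}\bra{w} : SR(\ket{v}),SR(\ket{w}) \leq k\big\} \subseteq M_m\otimes M_n,
\]
\[
\cl{S}_2 := \big\{\ket{v}\bra{v} : SR(\ket{v}) \leq k\big\} \subseteq (M_m\otimes M_n)^H,
\]
with the second living in the \emph{real} Hilbert space of Hermitian operators (this matters because $\bra{v}Y\ket{v}$ is real, and because $R_k$ is only a norm on $(M_m\otimes M_n)^H$). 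Each set is bounded, with every element of Frobenius norm $1$. For the span condition, it already suffices to take $k=1$: elementary tensors $\ket{a}\bra{c}\otimes\ket{b}\bra{d}$ span $M_m\otimes M_n$, and polarization of $\ket{a}\bra{a}\otimes\ket{b}\bra{b}$ recovers all of $(M_m\otimes M_n)^H$ over $\mathbb{R}$. Theorem~\ref{thm:dual_charac_general} then applies directly.

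For $\|\cdot\|_{S(k)}$ the theorem immediately yields
\[
\|X\|_{S(k)}^\circ = \inf\Big\{\sum_i|c_i| : X=\sum_i c_i\ket{v_i}\bra{w_i},\ SR(\ket{v_i}),SR(\ket{w_i})\leq k\Big\},
\]
which is exactly $\|X\|_{\gamma,k}$ as given in~\eqref{eq:sk_dual}. For $r_k^\otimes$ it yields
\[
r_k^\otimes(Y)^\circ = \inf\Big\{\sum_i|c_i| : Y=\sum_i c_i\ket{v_i}\bra{v_i},\ SR(\ket{v_i})\leq k\Big\},
\]
so the only remaining task is to identify this infimum with $R_k(Y)$. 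For the direction $R_k(Y)\leq r_k^\otimes(Y)^\circ$, I would take a decomposition $Y=\sum_i c_i \ket{v_i}\bra{v_i}$, group the terms with $c_i>0$ and $c_i<0$ to write $Y = P - N$ with $P,N \succeq 0$, and then set $c_j := \Tr(P),\Tr(N)$ and $\rho_j := P/c_j, N/c_j$ (discarding any $c_j=0$ term). By construction each $\rho_j$ is a convex combination of pure states of Schmidt rank at most $k$, so $SN(\rho_j)\leq k$, and $c_1+c_2 = \sum_i|c_i|$. For the reverse direction, I would take $Y = c_1\rho_1 - c_2\rho_2$, expand each $\rho_j$ as a convex combination of pure states $\ket{v_{j,i}}\bra{v_{j,i}}$ of Schmidt rank at most $k$, and substitute to obtain a signed rank-one decomposition of $Y$ whose absolute coefficient sum is again $c_1+c_2$.

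There is no substantive obstacle: Theorem~\ref{thm:dual_charac_general} does all the real work, and the only items requiring care are (i) carrying out the duality for $r_k^\otimes$ inside the real Hilbert space $(M_m\otimes M_n)^H$, so that the suprema of $|\bra{v}Y\ket{v}|$ and $|\langle \ket{v}\bra{v}, Y\rangle|$ agree without needing an auxiliary phase, and (ii) the elementary interchange between ``$Y = \sum_i c_i\ket{v_i}\bra{v_i}$'' and ``$Y = c_1\rho_1 - c_2\rho_2$'' just described, which is where $SN$ enters in place of $SR$.
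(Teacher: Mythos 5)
Your proposal is correct and follows essentially the same route as the paper: both apply Theorem~\ref{thm:dual_charac_general} with the generating sets $\{\ketbra{v}{w} : SR(\ket{v}),SR(\ket{w})\leq k\}$ in $M_m\otimes M_n$ and $\{\ketbra{v}{v} : SR(\ket{v})\leq k\}$ in the real space $(M_m\otimes M_n)^H$, and then convert the signed rank-one decomposition of $Y$ into the form $c_1\rho_1-c_2\rho_2$ by grouping positive and negative coefficients. You spell out the normalization $\rho_j = P/\Tr(P)$, $N/\Tr(N)$ and the span/boundedness hypotheses more explicitly than the paper does, but the argument is the same.
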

\begin{proof}
	To see the first equality, simply use Theorem~\ref{thm:dual_charac_general} with $\cl{H} = M_m \otimes M_n$ and $\cl{S} = \big\{ \ketbra{v}{w} : SR(\ket{v}),SR(\ket{w}) \leq k \big\}$. For the second equality, similarly let $\cl{H} = (M_m \otimes M_n)^H$ and $\cl{S} = \big\{ \ketbra{v}{v} : SR(\ket{v}) \leq k \big\}$ to see that
	\begin{align*}
		r^{\otimes}_k(Y)^\circ & = \inf\Big\{ \sum_i |c_i| : Y = \sum_i c_i\ketbra{v_i}{v_i} \text{ with } c_i \in \bb{R} \text{ and } SR(\ket{v_i}) \leq k \ \forall \, i \Big\}.
	\end{align*}
	By simply grouping the positive coefficients $\{c_i\}$ together, and similarly grouping the negative coefficients together, we see that
	\begin{align*}
		r^{\otimes}_k(Y)^\circ & = \inf\big\{ c_1 + c_2 : Y = c_1\rho_1 - c_2\rho_2 \text{ with } c_1,c_2\geq 0, SN(\rho_1),SN(\rho_2) \leq k \big\},
	\end{align*}
	as desired.
\end{proof}
A completely different proof that $\| X \|_{S(k)}^{\circ} = \|X\|_{\gamma,k}$, based on minimal and maximal operator spaces, was given in \cite{Joh12}. Indeed, the $S(k)$-norm is the $k$-minimal $L^\infty$-matrix norm on $M_n$ \cite{JKPP11}, so the dual of the $S(k)$-norm is analogously the $k$-maximal $L^1$-matrix norm on $M_n$.

Similarly, it was shown in \cite[Theorem~9]{JKPP11} that $r^{\otimes}_k(\cdot)$ is the natural norm on the $k$-super minimal operator system on $M_n$, introduced in \cite{Xthesis,Xha11}. This observation leads immediately to the following alternate characterizations of $r^{\otimes}_k(\cdot)$.
\begin{thm}\label{thm:op_sys_char}
	Let $X \in (M_m \otimes M_n)^H$. Then
	\begin{align}
		\label{eq:prod_rad_first}r^{\otimes}_k(X) & = \inf \big\{ s : sI \pm X \text{ are both $k$-block positive} \big\} \\
		\label{eq:prod_rad_second}& = \inf \left\{ s : \begin{bmatrix}sI_m \otimes I_n & X \\ X^\dagger & sI_m \otimes I_n\end{bmatrix} \in M_{2m} \otimes M_n \text{ is $k$-block positive} \right\}.
	\end{align}
\end{thm}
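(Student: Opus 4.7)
The plan is to establish (\ref{eq:prod_rad_first}) directly from the definitions and then reduce (\ref{eq:prod_rad_second}) to it by a local unitary conjugation. For (\ref{eq:prod_rad_first}), I would unfold $k$-block positivity: $sI \pm X$ is $k$-block positive precisely when $\bra{v}(sI \pm X)\ket{v} \geq 0$ for every unit $\ket{v}$ with $SR(\ket{v}) \leq k$, i.e., $\pm\bra{v}X\ket{v} \leq s$. Since $X$ is Hermitian, $\bra{v}X\ket{v}$ is real, so requiring both signs simultaneously is equivalent to $|\bra{v}X\ket{v}| \leq s$. Taking the supremum over all admissible $\ket{v}$ shows, by (\ref{eq:k_rad}), that the smallest such $s$ is exactly $r^{\otimes}_k(X)$.

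For (\ref{eq:prod_rad_second}), I would view the block operator as an element of $M_2 \otimes M_m \otimes M_n$ and conjugate by $U := H \otimes I_m \otimes I_n$, where $H$ is the $2\times 2$ Hadamard unitary. Using $X = X^\dagger$, a short direct calculation gives
\begin{align*}
U\begin{bmatrix}sI+X & 0 \\ 0 & sI-X\end{bmatrix}U^\dagger = \begin{bmatrix}sI & X \\ X^\dagger & sI\end{bmatrix}.
\end{align*}
Since $U$ acts only on the $\bb{C}^{2m} = \bb{C}^2 \otimes \bb{C}^m$ factor (trivially on $\bb{C}^n$), it preserves Schmidt rank across the $\bb{C}^{2m}{:}\bb{C}^n$ bipartition, and hence preserves $k$-block positivity in $M_{2m} \otimes M_n$. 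Therefore (\ref{eq:prod_rad_second}) is equivalent to asking when the block-diagonal operator on the left above is $k$-block positive in $M_{2m} \otimes M_n$.

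It then suffices to observe that this block-diagonal operator is $k$-block positive iff both $sI \pm X$ are $k$-block positive in $M_m \otimes M_n$, at which point (\ref{eq:prod_rad_first}) closes the argument. One direction is immediate by testing against the product vectors $\ket{0}\otimes\ket{v}$ and $\ket{1}\otimes\ket{v}$. The other direction is the step I expect to be the main obstacle: given $\ket{u} \in \bb{C}^{2m} \otimes \bb{C}^n$ with $SR(\ket{u}) \leq k$, written as $\ket{u} = \ket{0}\otimes\ket{u_1} + \ket{1}\otimes\ket{u_2}$, one needs $SR(\ket{u_1}),SR(\ket{u_2}) \leq k$ in order to conclude that $\bra{u_1}(sI+X)\ket{u_1} + \bra{u_2}(sI-X)\ket{u_2} \geq 0$. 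This bound on the Schmidt ranks of $\ket{u_1}$ and $\ket{u_2}$ follows by taking a Schmidt decomposition of $\ket{u}$ across the $\bb{C}^{2m}{:}\bb{C}^n$ cut and splitting each left factor along the $\bb{C}^2$ direction, which presents each $\ket{u_i}$ as a sum of at most $k$ product terms on $\bb{C}^m \otimes \bb{C}^n$.
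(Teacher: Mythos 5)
Your proof is correct, but it takes a genuinely different route from the paper. The paper proves this theorem by citation: it identifies $r^{\otimes}_k$ as the natural norm on the $k$-super minimal operator system (via \cite{JKPP11}), recognizes the right-hand sides of \eqref{eq:prod_rad_first} and \eqref{eq:prod_rad_second} as the ``order norm'' and the operator system norm in the sense of \cite{PT09}, and invokes the general fact that these norms coincide on Hermitian elements. You instead give a self-contained linear-algebra argument: \eqref{eq:prod_rad_first} follows by unfolding definitions, and \eqref{eq:prod_rad_second} reduces to \eqref{eq:prod_rad_first} via conjugation of the $2\times 2$ block operator by the local unitary $H\otimes I_m\otimes I_n$ (which turns it into $\mathrm{diag}(sI+X,\,sI-X)$ and preserves $k$-block positivity across the $\bb{C}^{2m}{:}\bb{C}^n$ cut), together with the observation that a block-diagonal operator is $k$-block positive iff its blocks are. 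The one step you flagged as a potential obstacle — that $\ket{u}=\ket{0}\otimes\ket{u_1}+\ket{1}\otimes\ket{u_2}$ with $SR(\ket{u})\le k$ forces $SR(\ket{u_1}),SR(\ket{u_2})\le k$ — does go through exactly as you say, by splitting each left Schmidt factor along the $\bb{C}^2$ direction, which exhibits each $\ket{u_i}$ as a sum of at most $k$ elementary tensors; note only that $\ket{u_1},\ket{u_2}$ need not be normalized, so you should apply $k$-block positivity in its homogeneous form. The paper's approach is shorter given the operator-system machinery and explains structurally why \eqref{eq:prod_rad_second} only upper-bounds $r^{\otimes}_k$ off the Hermitian part; yours is elementary, requires no background beyond the definitions, and makes explicit exactly where Hermiticity of $X$ is used.
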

\begin{proof}
	As already mentioned, $r^{\otimes}_k(\cdot)$ is the natural norm on Hermitian elements on the $k$-super minimal operator system on $M_n$. Various norms on operator systems were studied in \cite{PT09} -- in their notation, we have $r^{\otimes}_k(\cdot) = \|\cdot\|_m$, the minimal extension of the operator system norm from Hermitian elements to all of $M_m \otimes M_n$. Similarly, the norm~\eqref{eq:prod_rad_first} is the ``order norm'' $\|\cdot\|_{or}$ and the norm~\eqref{eq:prod_rad_second} is the natural operator system norm. Since all of these norms coincide on Hermitian matrices, the result follows.
\end{proof}

In general, the infimum~\eqref{eq:prod_rad_second} on non-Hermitian elements is not necessarily equal to $r^{\otimes}_k(\cdot)$, but rather is an upper bound of it.

We now begin presenting consequences of the duality provided by Theorem~\ref{thm:dual_norm_sk}. Our first result in this direction generalizes the fact that a density matrix $\rho$ is separable if and only if $\|\rho\|_{\gamma,1} = 1$ \cite{R00}, which is known as the \emph{cross norm criterion} for separability.
\begin{thm}\label{thm:dual_sep_equiv}
	Let $\rho \in M_m \otimes M_n$ be a density matrix. Then the following are equivalent:
	\begin{enumerate}[(a)]
		\item $SN(\rho) \leq k$;
		\item $\|\rho\|_{\gamma,k} = 1$; and
		\item $R_k(\rho) = 1$.
	\end{enumerate}
\end{thm}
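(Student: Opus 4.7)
The plan is to exploit the duality of Theorem~\ref{thm:dual_norm_sk} together with Proposition~\ref{prop:k_block_pos} and the characterization $SN(\rho) \leq k \iff \Tr(\rho Y) \geq 0$ for every $k$-block positive $Y$. My first observation will be that the value $1$ is actually the \emph{minimum} of $\|\cdot\|_{\gamma,k}$ and of $R_k$ across all density matrices. Since $\bra{v}I\ket{v} = 1$ on any separable unit vector, one has $\|I\|_{S(k)} = r^{\otimes}_k(I) = 1$, and Theorem~\ref{thm:dual_norm_sk} together with the standard dual-norm inequality gives
$$1 = |\Tr(\rho I)| \leq \|I\|_{S(k)} \|\rho\|_{\gamma,k} = \|\rho\|_{\gamma,k},$$
and analogously $1 \leq R_k(\rho)$. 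So (b) and (c) amount to saying that $\rho$ sits on the unit sphere of the respective norms.

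For (a) $\Rightarrow$ (b) and (a) $\Rightarrow$ (c), I would take a convex decomposition $\rho = \sum_i p_i \ketbra{v_i}{v_i}$ witnessing $SN(\rho) \leq k$ and substitute it directly into~\eqref{eq:sk_dual} (with $\ket{w_i} := \ket{v_i}$) and~\eqref{eq:k_robust} (with $c_1 = 1$, $c_2 = 0$, $\rho_1 := \rho$). Both definitions then return upper bounds of $\sum_i p_i = 1$, which combined with the previous paragraph's lower bound yields equality.

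For the converse directions (b) $\Rightarrow$ (a) and (c) $\Rightarrow$ (a), I would let $Y$ be an arbitrary $k$-block positive operator and invoke Proposition~\ref{prop:k_block_pos} to write $Y = cI - X$ with $X$ positive semidefinite and $c = \|X\|_{S(k)}$ (or $c = r^{\otimes}_k(X)$ in the (c) case). Then $\Tr(\rho Y) = c - \Tr(\rho X)$, and the dual-norm pairing from Theorem~\ref{thm:dual_norm_sk} bounds
$$|\Tr(\rho X)| \leq \|X\|_{S(k)} \|\rho\|_{\gamma,k} = c$$
(with the analogous $r^{\otimes}_k$--$R_k$ estimate in the other case), so $\Tr(\rho Y) \geq 0$. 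Since this holds for every $k$-block positive $Y$, the witness characterization of Schmidt number forces $SN(\rho) \leq k$. I do not foresee a real obstacle here: the only mildly nontrivial step is recognizing that the identity supplies the universal lower-bound witness, after which Proposition~\ref{prop:k_block_pos} is precisely the bridge that converts the dual-norm inequality into the desired $k$-block positivity estimate.
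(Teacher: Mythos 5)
Your proposal is correct and follows essentially the same route as the paper: the converse directions hinge on exactly the paper's argument (write a $k$-block positive $Y$ as $cI-X$, use Proposition~\ref{prop:k_block_pos} and the dual-norm pairing from Theorem~\ref{thm:dual_norm_sk} to get $\Tr(\rho Y)\geq 0$), while your lower bound via pairing with $I$ and your direct substitution into~\eqref{eq:sk_dual} and~\eqref{eq:k_robust} are trivial variants of the paper's $\|\rho\|_{\gamma,k}\geq\|\rho\|_{tr}=1$ and $E_{R,k}$ observations. One small correction: Proposition~\ref{prop:k_block_pos} only guarantees $c\geq\|X\|_{S(k)}$ (resp.\ $c\geq r^{\otimes}_k(X)$) for a $k$-block positive $Y=cI-X$, not equality, but your estimate $\Tr(\rho Y)=c-\Tr(\rho X)\geq c-\|X\|_{S(k)}\geq 0$ goes through unchanged.
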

\begin{proof}
	Note that $R_k(\rho) \geq \|\rho\|_{\gamma,k} \geq \|\rho\|_{tr} = 1$ for all $\rho$, so we only need to show two implications:
	\begin{enumerate}[(i)]
		\item if $SN(\rho) \leq k$ then $R_k(\rho) \leq 1$, and
		\item if $\|\rho\|_{\gamma,k} \leq 1$ then $SN(\rho) \leq k$.
	\end{enumerate}
	The implication (i) follows from the easily-verified facts that $R_k(\rho) = 2E_{R,k}(\rho) + 1$ for all $\rho$ and $E_{R,k}(\rho) = 0$ if and only if $SN(\rho) \leq k$.
	
	To see the implication (ii), suppose $\|\rho\|_{\gamma,k} \leq 1$ and let $Y \in (M_m \otimes M_n)^H$ be $k$-block positive. If we write $Y = cI - X$ with $X$ positive semidefinite then $c \geq \|X\|_{S(k)}$, by Proposition~\ref{prop:k_block_pos}. We then have
	\begin{align*}
		\Tr(\rho Y) = \Tr\big(\rho(cI - X)\big) = c - \Tr(\rho X) \geq c - \big\|X\big\|_{S(k)} \geq 0,
	\end{align*}
	where we used the duality of Theorem~\ref{thm:dual_norm_sk} in the second-last inequality. Since $Y$ is an arbitrary $k$-block positive operator, it follows that $SN(\rho) \leq k$, which completes the proof.
\end{proof}

%%%%%%%%%%%%%%%%%%%%%%%%%%%%%%%%%%%%%%%%
\section{Values on Pure States}\label{sec:pure_states}
%%%%%%%%%%%%%%%%%%%%%%%%%%%%%%%%%%%%%%%%

We now consider the problem of computing the norms~\eqref{eq:k_rad}, \eqref{eq:sk_norm}, \eqref{eq:sk_dual}, and \eqref{eq:k_robust} on pure states $\ketbra{v}{v}$. Because each of these norms is invariant under operations of the form $X \mapsto (U \otimes V)X(U \otimes V)^\dagger$, where $U \in M_m$ and $V \in M_n$ are unitary operators, we know that their values on pure states depend only on the state's Schmidt coefficients $\alpha_1 \geq \alpha_2 \geq \cdots \geq \alpha_{\min\{m,n\}} \geq 0$. Thus, we look for formulas for these norms on pure states in terms of Schmidt coefficients.

It was shown in \cite[Theorem~3.3 and Proposition~4.3]{JK10} that
\begin{align*}
	r^{\otimes}_k(\ketbra{v}{v}) = \big\|\ketbra{v}{v}\big\|_{S(k)} = \sum_{i=1}^k \alpha_i^2.
\end{align*}

We thus move directly to the problem of calculating $\|\ketbra{v}{v}\|_{\gamma,k}$. It was shown in \cite{Rud01} that in the $k = 1$ case we have
\begin{align}\label{eq:gamma_norm_1_pure}
	\big\|\ketbra{v}{v}\big\|_{\gamma,1} = \left(\sum_{i=1}^{\min\{m,n\}} \alpha_i\right)^2.
\end{align}
The following result establishes the natural generalization of this fact for arbitrary $k$.
\begin{thm}\label{thm:pure_state_sk_dual}
	Let $\ket{v} \in \bb{C}^m \otimes \bb{C}^n$ and fix $1 \leq k \leq \min\{m,n\}$. Let $r$ be the largest index $1 \leq r < k$ such that $\alpha_r > \sum_{i=r+1}^{\min\{m,n\}}\alpha_{i}/(k-r)$ (or take $r = 0$ if no such index exists). Also define $\tilde{\alpha} := \sum_{i=r+1}^{\min\{m,n\}}\alpha_i/(k-r)$. Then	
	\begin{align}\label{eq:sk_dual_pure}
		\big\|\ketbra{v}{v}\big\|_{\gamma,k} = \sum_{i=1}^r \alpha_i^2 + (k - r)\tilde{\alpha}^2.
	\end{align}
\end{thm}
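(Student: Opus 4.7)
The plan is to reduce this to the matrix statement Theorem~\ref{thm:2k_dual} via the natural isometric isomorphism $T : \bb{C}^m \otimes \bb{C}^n \to M_{m,n}$ that sends $\ket{v} = \sum_{i,j} v_{ij}\ket{i}\otimes \ket{j}$ to the matrix $V = (v_{ij})$. Under $T$, Schmidt rank corresponds to matrix rank, so the Schmidt coefficients $\alpha_i$ are exactly the singular values of $V$. Introducing the vector norm $\mnorm{\ket{v}}_\diamond := \sup\{|\braket{y}{v}| : SR(\ket{y}) \leq k,\, \|\ket{y}\|\leq 1\}$, one checks directly that $\mnorm{\ket{v}}_\diamond = \|V\|_{(k,2)}$, so Theorem~\ref{thm:2k_dual} identifies $\bigl(\mnorm{\ket{v}}_\diamond^{\circ}\bigr)^{2}$ with the right-hand side of~\eqref{eq:sk_dual_pure}. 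It therefore suffices to prove the identity
\[
\|\ketbra{v}{v}\|_{\gamma,k} \;=\; \bigl(\mnorm{\ket{v}}_\diamond^{\circ}\bigr)^{2}.
\]

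For the upper bound, I would invoke Theorem~\ref{thm:dual_charac_general} with $\cl{S} = \{\ket{y} : SR(\ket{y}) \leq k,\, \|\ket{y}\| \leq 1\}$ to rewrite $\mnorm{\ket{v}}_\diamond^{\circ}$ as an infimum over decompositions $\ket{v} = \sum_i c_i \ket{w_i}$ with each $SR(\ket{w_i}) \leq k$ and $\|\ket{w_i}\| \leq 1$. Any such decomposition yields $\ketbra{v}{v} = \sum_{i,j} c_i \bar c_j \ketbra{w_i}{w_j}$, which is an admissible decomposition in the definition~\eqref{eq:sk_dual} of $\|\cdot\|_{\gamma,k}$. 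A short bookkeeping calculation controls the resulting sum of coefficients by $\bigl(\sum_i |c_i| \|\ket{w_i}\|\bigr)^{2} \leq \bigl(\sum_i |c_i|\bigr)^{2}$, so taking the infimum gives $\|\ketbra{v}{v}\|_{\gamma,k} \leq \bigl(\mnorm{\ket{v}}_\diamond^{\circ}\bigr)^{2}$.

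For the matching lower bound, I would use finite-dimensional compactness to select $\ket{u^*}$ attaining $|\braket{u^*}{v}| = \mnorm{\ket{v}}_\diamond^{\circ}$ with $\mnorm{\ket{u^*}}_\diamond = 1$, and set $Y := \ketbra{u^*}{u^*}$. A direct computation from~\eqref{eq:sk_norm} gives $\|Y\|_{S(k)} = \mnorm{\ket{u^*}}_\diamond^{2} = 1$, while $\Tr(Y \ketbra{v}{v}) = |\braket{u^*}{v}|^{2} = \bigl(\mnorm{\ket{v}}_\diamond^{\circ}\bigr)^{2}$. Feeding $Y$ into the dual characterization $\|\ketbra{v}{v}\|_{\gamma,k} = \|\ketbra{v}{v}\|_{S(k)}^{\circ}$ from Theorem~\ref{thm:dual_norm_sk} then yields the matching lower bound. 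The only non-mechanical point is tracking how $T$ interacts with the transposition/conjugation appearing in the definition $\|V\|_{(k,2)} = \sup\{|\Tr(VY)|:\ldots\}$ when matching it to the Hilbert--Schmidt pairing $\braket{y}{v}$; once that is pinned down everything is routine, and specializing to $k=1$ forces $r=0$ and $\tilde{\alpha} = \sum_i \alpha_i$, recovering the known formula~\eqref{eq:gamma_norm_1_pure} as a sanity check.
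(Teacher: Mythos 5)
Your proposal is correct and takes essentially the same route as the paper's proof: your auxiliary norm $\mnorm{\cdot}_{\diamond}$ is exactly the paper's vector norm $\|\cdot\|_{s(k)}$, and the paper likewise reduces everything to the identity $\big\|\ketbra{v}{v}\big\|_{\gamma,k} = \bigl(\big\|\ket{v}\big\|_{s(k)}^{\circ}\bigr)^{2}$, proving the upper bound by squaring a vector decomposition obtained from Theorem~\ref{thm:dual_charac_general} and the lower bound by restricting the dual pairing to rank-one test operators $\mathbf{w}\mathbf{w}^{\dagger}$ with $\big\|\mathbf{w}\mathbf{w}^{\dagger}\big\|_{S(k)}\leq 1$. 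The remaining evaluation of the dual vector norm in terms of the Schmidt coefficients is carried out in both arguments via the singular-value duality of Theorem~\ref{thm:2k_dual}.
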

\begin{proof}
	To see that $\big\|\ketbra{v}{v}\big\|_{\gamma,k} \geq \sum_{i=1}^r \alpha_i^2 + (k - r)\tilde{\alpha}^2$, use the duality of Theorem~\ref{thm:dual_norm_sk} to see that
	\begin{align*}
		\big\|\ketbra{v}{v}\big\|_{\gamma,k} & = \sup \Big\{ \big| \Tr(\ketbra{v}{v}X) \big| : \big\|X\big\|_{S(k)} \leq 1 \Big\} \geq \sup \Big\{ \big| \bra{v}\mathbf{w} \big|^2 : \big\|\mathbf{w}\mathbf{w}^\dagger\big\|_{S(k)} \leq 1 \Big\}.
	\end{align*}
	If we now define the norm
	\begin{align*}
		\big\|\mathbf{w}\big\|_{s(k)} := \sup \Big\{ \big| \bra{v}\mathbf{w} \big| : SR(\ket{v}) \leq k \Big\},
	\end{align*}
	then it is clear that $\|\mathbf{w}\mathbf{w}^\dagger\|_{S(k)} \leq 1$ if and only if $\|\mathbf{w}\|_{s(k)} \leq 1$. Thus
	\begin{align*}
	\big\|\ketbra{v}{v}\big\|_{\gamma,k} & \geq \sup \Big\{ \big| \bra{v}\mathbf{w} \big|^2 : \big\|\mathbf{w}\big\|_{s(k)} \leq 1 \Big\} = \left(\big\|\ket{v}\big\|_{s(k)}^\circ\right)^2.
	\end{align*}
	
	By using the fact that $\|\ket{v}\|_{s(k)} = \sqrt{\sum_{i=1}^k \alpha_i^2}$ (see \cite[Theorem~3.3]{JK10}) and the duality result \cite[Theorem~3.3]{MF85}, we see that
	\begin{align}\label{eq:sk_vec_dual}
		\big\|\ket{v}\big\|_{s(k)}^\circ = \sqrt{\sum_{i=1}^r \alpha_i^2 + (k - r)\tilde{\alpha}^2},
	\end{align}
	where $r$ and $\tilde{\alpha}$ are as in the statement of the theorem (for a more explicit proof of Equation~\eqref{eq:sk_vec_dual}, see \cite[Section~4.1.2]{Joh12}). This completes the ``$\geq$'' direction of the proof.
	
	To see the opposite inequality, use Theorem~\ref{thm:dual_charac_general} to see that
	\begin{align*}
		\big\|\ket{v}\big\|_{s(k)}^\circ = \inf\Big\{ \sum_i |c_i| : \ket{v} = \sum_i c_i\ket{v_i} \text{ with } SR(\ket{v_i}) \leq k \ \forall \, i \Big\},
	\end{align*}
	where the infimum is taken over all decompositions of $\ket{v}$ of the given form. It follows that
	\begin{align*}
		\big\|\ketbra{v}{v}\big\|_{\gamma,k} & = \inf\Big\{ \sum_i |c_i| : \ketbra{v}{v} = \sum_i c_i\ketbra{v_i}{w_i} \text{ with } SR(\ket{v_i}),SR(\ket{w_i}) \leq k \ \forall \, i \Big\} \\
		& \leq \inf\Big\{ \left(\sum_{i} |c_i|\right)^2 : \ket{v} = \sum_i c_i\ket{v_i} \text{ with } SR(\ket{v_i}) \leq k \ \forall \, i \Big\} \\
		& = \left(\big\|\ket{v}\big\|_{s(k)}^\circ\right)^2.
	\end{align*}
	By using Equation~\eqref{eq:sk_vec_dual} again, the desired inequality follows, and the proof is complete.
\end{proof}

Before proceeding, we make some observations about Theorem~\ref{thm:pure_state_sk_dual}. If $k = 1$ then the only possible choice for $r$ is $r = 0$, so the norm reduces to simply Equation~\eqref{eq:gamma_norm_1_pure} in this case, as it should. At the other extreme, if $k = \min\{m,n\}$ then $r = \min\{m,n\} - 1$. Thus
\begin{align*}
	\big\|\ketbra{v}{v}\big\|_{\gamma,\min\{m,n\}} = \sum_{i=1}^{\min\{m,n\}} \alpha_i^2 = 1,
\end{align*}
which is just the trace norm of $\ketbra{v}{v}$, as expected. Similarly, if $1 \leq k \leq \min\{m,n\}$ and $SR(\ket{v}) \leq k$ then this same argument shows that $\|\ketbra{v}{v}\|_{\gamma,k} = 1$, which we expect from Theorem~\ref{thm:dual_sep_equiv}.

Note that Theorem~\ref{thm:pure_state_sk_dual} generalizes in the obvious way to non-Hermitian rank-one operators of the form $\ketbra{v}{w}$. Indeed, if $\{\alpha_i\}$ and $\{\beta_i\}$ are the Schmidt coefficients of $\ket{v}$ and $\ket{w}$ respectively, we define $r$ as in Theorem~\ref{thm:pure_state_sk_dual} (and analogously define $s$ for $\{\beta_i\}$), and we let $\tilde{\alpha}$ (and analogously $\tilde{\beta}$) be as in the theorem, then
\begin{align*}
	\big\|\ketbra{v}{w}\big\|_{\gamma,k} = \sqrt{\sum_{i=1}^r \alpha_i^2 + (k - r)\tilde{\alpha}^2}\sqrt{\sum_{i=1}^s \beta_i^2 + (k - s)\tilde{\beta}^2}.
\end{align*}
This provides the natural generalization of \cite[Proposition~11]{Rud05}.

Finally, we wish to obtain a formula for $R_k(\ketbra{v}{v})$. We conjecture (but do not prove) that
\begin{align*}
	R_k(\ketbra{v}{v}) = 2\big\|\ketbra{v}{v}\big\|_{\gamma,k} - 1.
\end{align*}
Indeed, this formula was proved in the $k = 1$ case in \cite{VT99} and holds trivially in the $k = \min\{m,n\}$ case since the left hand side and right hand side both equal $1$. We are not aware of a proof or a counter-example for the intermediate values of $k$.

%%%%%%%%%%%%%%%%%%%%%%%%%%%%%%%%%%%%%%%%
\section{Isometry Groups}\label{sec:isometry}
%%%%%%%%%%%%%%%%%%%%%%%%%%%%%%%%%%%%%%%%

The entanglement norms we are considering are all invariant under local unitaries -- indeed, this is typically included as an axiom for what makes a ``good'' entanglement measure \cite{Vid00}. Slightly more generally, we consider unitary matrices $U \in M_m \otimes M_n$ of the form
\begin{align}\label{eq:localU}
	U = U_1 \otimes U_2	\ \ \text{ or } \ \ n = m \text{ and } U = S(U_1 \otimes U_2),
\end{align}
where $U_1 \in M_m$ and $U_2 \in M_n$ are unitary matrices and $S \in M_n \otimes M_n$ is the \emph{swap operator} defined on elementary tensors by $S(\ket{a}\otimes\ket{b}) = \ket{b}\otimes\ket{a}$.

It is easily-verified that, for all $k$, if $U$ and $V$ are unitary matrices of the form~\eqref{eq:localU}, then
\begin{align*}
	\big\| UXV \big\|_{S(k)} = \big\| X \big\|_{S(k)} \ \ & \text{ and } \ \ \big\| UXV \big\|_{\gamma,k} = \big\| X \big\|_{\gamma,k} \quad \forall \, X \in M_m \otimes M_n, \ \text{ and} \\
	r^{\otimes}_k(UXU^\dagger) = r^{\otimes}_k(X) \ \ & \text{ and } \ \ R_k(UXU^\dagger) = R_k(X) \quad \forall \, X \in (M_m \otimes M_n)^H.
\end{align*}
Using Theorem~\ref{thm:dual_norm_sk}, we can now answer the question of what other linear maps preserve these norms (i.e., we derive the structure of the isometry groups of these norms). In all cases, we see that the local unitaries are almost the only preservers of these norms.
\begin{thm}\label{thm:sk_iso}
	Let $1 \leq k < \min\{m,n\}$ and let $\Phi : M_m \otimes M_n \rightarrow M_m \otimes M_n$ be linear. The following are equivalent:
	\begin{enumerate}[(a)]
		\item $\big\|\Phi(X)\big\|_{S(k)} = \big\|X\big\|_{S(k)}$ for all $X \in M_m \otimes M_n$;
		\item $\big\|\Phi(X)\big\|_{\gamma,k} = \big\|X\big\|_{\gamma,k}$ for all $X \in M_m \otimes M_n$; and
		\item $\Phi$ can be written as a composition of one or more of the following maps:
		\begin{itemize}
			\item $X \mapsto UXV$, where $U$ and $V$ are unitary matrices of the form~\eqref{eq:localU},
			\item the transpose map $T$, and
			\item if $k = 1$, the partial transpose map $(id_m \otimes T)$.
		\end{itemize}
	\end{enumerate}
\end{thm}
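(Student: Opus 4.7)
My strategy is to prove $(c)\Rightarrow(a)$ and $(c)\Rightarrow(b)$ by direct verification, to obtain $(a)\Leftrightarrow(b)$ from the duality of Theorem~\ref{thm:dual_norm_sk} together with property~\eqref{eq:dual_isom}, and to reduce the main implication to $(b)\Rightarrow(c)$, which I would handle by an extreme-point/preserver argument.

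For $(c)\Rightarrow(a),(b)$, invariance under unitaries of the form~\eqref{eq:localU} is recorded just before the theorem. For the transpose one checks that $T(\ketbra{v}{w})=\ketbra{\bar w}{\bar v}$, and since entrywise conjugation preserves Schmidt rank, $T$ permutes the generating set $\{\ketbra{v}{w}:SR(\ket{v}),SR(\ket{w})\le k\}$ appearing in both~\eqref{eq:sk_norm} and~\eqref{eq:sk_dual}, hence preserves both norms. When $k=1$, the partial transpose sends $\ketbra{a}{c}\otimes\ketbra{b}{d}$ to $\ketbra{a}{c}\otimes\ketbra{\bar d}{\bar b}$, which is again a product rank-one operator, so it preserves $\|\cdot\|_{\gamma,1}$; via Theorem~\ref{thm:dual_norm_sk}, property~\eqref{eq:dual_isom}, and the self-adjointness of $id_m\otimes T$, it also preserves $\|\cdot\|_{S(1)}$.

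For $(a)\Leftrightarrow(b)$, property~\eqref{eq:dual_isom} applied in the Hilbert--Schmidt inner-product space shows that $\Phi$ is an $S(k)$-isometry iff its HS-adjoint $\Phi^*$ is a $\gamma,k$-isometry. The set in (c) is closed under HS-adjoints (the adjoint of $X\mapsto UXV$ is $X\mapsto U^\dagger XV^\dagger$, which is again of that form, and $T$ and $id_m\otimes T$ are HS-self-adjoint), so once I have $(b)\Rightarrow(c)$ I immediately obtain $(a)\Rightarrow(c)$ by applying the classification to $\Phi^*$. I therefore focus on $(b)\Rightarrow(c)$: by Theorem~\ref{thm:dual_norm_sk} the closed unit ball of $\|\cdot\|_{\gamma,k}$ is the closed absolute convex hull of $\cl{R}:=\{\ketbra{v}{w}:SR(\ket{v}),SR(\ket{w})\le k,\|\ket{v}\|=\|\ket{w}\|=1\}$, and (up to a unimodular phase) $\cl{R}$ is precisely the set of extreme points of this ball. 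Hence $\Phi$ permutes $\cl{R}$; in particular $\Phi$ is a linear rank-one preserver on $M_m\otimes M_n$, and the classical Marcus--Moyls theorem gives $\Phi(X)=AXB$ or $\Phi(X)=AX^TB$ for some invertible $A,B\in M_m\otimes M_n$. After composing with $T$ if necessary, I may reduce to the first form; the Schmidt-rank condition on the elements of $\cl{R}$ then forces $A$ and $B$ to be invertible linear self-maps of $\bb{C}^m\otimes\bb{C}^n$ preserving the projective variety of vectors of Schmidt rank at most $k$.

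The main obstacle is this last classification: I need that for $1\le k<\min\{m,n\}$ the invertible linear preservers of the Schmidt-rank-$\le k$ variety in $\bb{C}^m\otimes\bb{C}^n$ are, up to a nonzero scalar, exactly the local unitaries $U_1\otimes U_2$ together with (when $m=n$) their compositions with the swap operator $S$. This is a linear-preserver statement about the Segre variety and its higher secant varieties, and the strict inequality $k<\min\{m,n\}$ is essential: at $k=\min\{m,n\}$ the variety is everything, $\|\cdot\|_{S(k)}$ collapses to the ordinary operator norm, and the isometry group becomes much larger. The distinguished role of the partial transpose in the $k=1$ clause reflects an extra symmetry of the Segre variety itself, namely that transposing on one tensor factor sends product vectors to product vectors, a symmetry that vanishes as soon as $k\ge 2$.
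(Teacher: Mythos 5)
Your handling of the duality half coincides with the paper's: the paper likewise derives the equivalence of (a) and (b) from Property~\eqref{eq:dual_isom} together with the observation that the class of maps in (c) is closed under Hilbert--Schmidt adjoints, and for the equivalence of (a) and (c) it simply cites \cite{Joh11} rather than proving it. Where you go further is in sketching a proof of (b)$\Rightarrow$(c), and that sketch contains a genuine error. Granting your (correct, and provable via containment in the trace-norm ball) claim that the extreme points of the $\|\cdot\|_{\gamma,k}$ unit ball are the elements of $\cl{R}$ up to phase, you conclude that $\Phi$ permutes $\cl{R}$ and then infer that ``$\Phi$ is a linear rank-one preserver'' to which Marcus--Moyls applies. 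That inference is invalid: $\cl{R}$ is a proper subset of the rank-one operators (only those whose left and right vectors have Schmidt rank at most $k$), and a map permuting $\cl{R}$ need not send the remaining rank-one operators to operators of rank at most one. The partial transpose $id_m\otimes T$ is a concrete counterexample when $k=1$: it permutes $\cl{R}$, yet it sends $\ketbra{x}{y}$ with entangled $\ket{x},\ket{y}$ to an operator of rank greater than one, and it is not of the form $X\mapsto AXB$ or $X\mapsto AX^TB$. Since $id_m\otimes T$ genuinely belongs to the isometry group for $k=1$, your argument as written would exclude a map that the theorem asserts is an isometry, so the Marcus--Moyls step is not a small gap but a wrong turn; the classification in \cite{Joh11} has to work harder precisely to accommodate this map.

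Beyond that, the step you yourself flag as ``the main obstacle'' --- classifying the invertible linear preservers of the Schmidt-rank-$\leq k$ set and upgrading them from local invertible maps to local unitaries --- is essentially the entire content of the cited result, so even setting the Marcus--Moyls issue aside your proposal does not close (b)$\Rightarrow$(c). In summary: your (c)$\Rightarrow$(a),(b) verification and your duality argument for (a)$\Leftrightarrow$(b) match the paper and are fine, but the hard implication remains open in your write-up, and the route you propose for it would need to be restructured around the preserver problem for the set $\cl{R}$ itself rather than around rank-one preservation of all of $M_m\otimes M_n$.
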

\begin{proof}
	The equivalence of (a) and (c) was proved in \cite{Joh11}. To see that (a) and (b) are equivalent, simply recall Property~\eqref{eq:dual_isom}, which says that if $\mnorm{\cdot}$ is any norm on $M_m \otimes M_n$ then $\mnorm{\Phi(X)} = \mnorm{X}$ for all $X \in M_m \otimes M_n$ if and only if $\mnorm{\Phi^\dagger(X)}^\circ = \mnorm{X}^\circ$ for all $X \in M_m \otimes M_n$, where $\Phi^\dagger$ is the adjoint map in the Hilbert--Schmidt inner product defined by $\Tr(A^\dagger\Phi^\dagger(B)) = \Tr(\Phi(A)^\dagger B)$ for all $A,B \in M_m \otimes M_n$.
	
	Now simply note that $T^\dagger = T$, $(id_m \otimes T)^\dagger = (id_m \otimes T)$, and if $\Phi(X) = UXV$ then $\Phi^\dagger(X) = U^\dagger XV^\dagger$, and each of $U^\dagger$ and $V^\dagger$ are of the form~\eqref{eq:localU} whenever $U$ and $V$ have that form. The result then follows from Theorem~\ref{thm:dual_norm_sk}.
\end{proof}

Using these techniques we could similarly derive the isometry group of one of $r^{\otimes}_k(\cdot)$ or $R_k(\cdot)$ from the isometry group of the other one. However, to our knowledge the isometry group has not yet been derived for either of these norms.

%%%%%%%%%%%%%%%%%%%%%%%%%%%%%%%%%%%%%%%%
\section{Realignment Criterion for Arbitrary Schmidt Number}\label{sec:realignment}
%%%%%%%%%%%%%%%%%%%%%%%%%%%%%%%%%%%%%%%%

Since the partial transpose map $id \otimes T$ and multiplication on the right by the swap operator $S$ both preserve the norm $\|\cdot\|_{\gamma,1}$ (see Theorem~\ref{thm:sk_iso}), it follows that the \emph{realignment map} $L : M_{m,n} \otimes M_{r,s} \rightarrow M_{m,r} \otimes M_{n,s}$ defined by $L(X) = (id \otimes T)(XS)S$ also satisfies $\|L(X)\|_{\gamma,1} = \|X\|_{\gamma,1}$ for all $X$. An immediate but important consequence of this observation is the fact that if $\rho$ is separable then $\|L(\rho)\|_{tr} \leq \|L(\rho)\|_{\gamma,1} = \|\rho\|_{\gamma,1} = 1$, where the final equality comes from Theorem~\ref{thm:dual_sep_equiv}.

The fact that $\|L(\rho)\|_{tr} \leq 1$ whenever $\rho$ is separable is known as the \emph{realignment criterion} \cite{CW03} or the \emph{computable cross norm criterion} \cite{R03}. We now present a natural generalization of this criterion for arbitrary Schmidt number, which uses the norms of Theorem~\ref{thm:2k_dual} rather than the trace norm.
\begin{thm}\label{thm:gen_realign}
	If $\rho \in M_m \otimes M_n$ has $SN(\rho) \leq k$ then $\|L(\rho)\|_{(k^2,2)}^{\circ} \leq 1$.
\end{thm}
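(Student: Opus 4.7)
The plan is to first reduce to the case of pure states with Schmidt rank at most $k$ via convexity, and then control $\|L(\ketbra{v}{v})\|_{(k^2,2)}^\circ$ using two observations about the realignment map: it preserves the Frobenius norm, and it sends Schmidt-rank-$k$ pure states to operators of rank at most $k^2$. Together with the characterization of the dual $(k^2,2)$-norm coming from Theorem~\ref{thm:dual_charac_general}, this will force $\|L(\ketbra{v}{v})\|_{(k^2,2)}^\circ \leq 1$, and the general case follows by the triangle inequality.

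First I would write any $\rho$ with $SN(\rho) \leq k$ as a convex combination $\rho = \sum_i p_i \ketbra{v_i}{v_i}$ with $SR(\ket{v_i}) \leq k$ and $p_i \geq 0$, $\sum_i p_i = 1$. Since $L$ is linear and $\|\cdot\|_{(k^2,2)}^\circ$ is a norm,
\begin{align*}
\bigl\|L(\rho)\bigr\|_{(k^2,2)}^\circ \leq \sum_i p_i \bigl\|L(\ketbra{v_i}{v_i})\bigr\|_{(k^2,2)}^\circ,
\end{align*}
so it suffices to establish the bound for a pure state $\ketbra{v}{v}$ with $SR(\ket{v}) \leq k$.

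For such a $\ket{v}$, write the Schmidt decomposition $\ket{v} = \sum_{i=1}^k \alpha_i \ket{a_i}\otimes\ket{b_i}$, so that $\ketbra{v}{v} = \sum_{i,j=1}^k \alpha_i\alpha_j \, \ket{a_i}\bra{a_j}\otimes\ket{b_i}\bra{b_j}$. The realignment map sends each elementary tensor of matrix units to a rank-one matrix (it simply rearranges the entries of $X$, identifying $\ket{a}\bra{c}\otimes\ket{b}\bra{d}$ with a rank-one operator of the form $\ket{ab}\bra{cd}$ up to conventions), and therefore $L(\ketbra{v}{v})$ is a sum of $k^2$ rank-one matrices. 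Consequently ${\rm rank}(L(\ketbra{v}{v})) \leq k^2$. Moreover, the realignment merely permutes the entries of the matrix, so $\|L(\ketbra{v}{v})\|_F = \|\ketbra{v}{v}\|_F = 1$.

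Now I would invoke the dual characterization of $\|\cdot\|_{(k^2,2)}$. Applying Theorem~\ref{thm:dual_charac_general} to the set $\cl{S} = \{Y : {\rm rank}(Y) \leq k^2, \|Y\|_F \leq 1\}$ gives
\begin{align*}
\bigl\|X\bigr\|_{(k^2,2)}^\circ = \inf\Bigl\{ \sum_i |c_i| : X = \sum_i c_i Y_i, \ Y_i \in \cl{S} \Bigr\}.
\end{align*}
Since $L(\ketbra{v}{v})$ itself lies in $\cl{S}$ (it has rank at most $k^2$ and Frobenius norm $1$), the trivial one-term decomposition $L(\ketbra{v}{v}) = 1 \cdot L(\ketbra{v}{v})$ yields $\|L(\ketbra{v}{v})\|_{(k^2,2)}^\circ \leq 1$. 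Combined with the convex combination bound above, this finishes the proof.

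I do not anticipate a genuine obstacle: the only step that demands any actual computation is verifying that $L$ takes an elementary tensor $\ket{a}\bra{c}\otimes\ket{b}\bra{d}$ to a rank-one matrix (and hence that $L(\ketbra{v}{v})$ has rank at most $k^2$), which is immediate from the definition $L(X) = (id\otimes T)(XS)S$ once one unpacks how $S$ permutes the tensor factors. Everything else is a clean application of linearity, convexity, and the dual-norm formula.
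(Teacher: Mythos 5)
Your proof is correct and follows essentially the same route as the paper's: decompose $\rho$ into pure states of Schmidt rank at most $k$, observe that $L$ sends each such projection to an operator of rank at most $k^2$ and Frobenius norm $1$, and conclude via the triangle inequality and the duality of $\|\cdot\|_{(k^2,2)}$. Your final step is in fact slightly more direct than the paper's -- you bound $\|L(\ketbra{v}{v})\|_{(k^2,2)}^\circ \leq 1$ by exhibiting the trivial one-term decomposition in the infimum formula of Theorem~\ref{thm:dual_charac_general}, whereas the paper argues that $\|\cdot\|_{(k^2,2)}^\circ$ is the largest norm agreeing with $\|\cdot\|_F$ on matrices of rank at most $k^2$.
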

\begin{proof}
	Suppose $SN(\rho) \leq k$ and begin by writing $\rho$ as a convex combination of projections onto states with Schmidt rank no greater than $k$:
	\begin{align*}
		\rho = \sum_i p_i \sum_{j,\ell=1}^k \alpha_{ij}\alpha_{i\ell}\ketbra{v_{ij}}{v_{i\ell}} \otimes \ketbra{w_{ij}}{w_{i\ell}}
	\end{align*}
	Then
	\begin{align*}
		L(\rho) = \sum_i p_i \left(\sum_{j=1}^k \alpha_{ij}\ket{v_{ij}}\overline{\bra{w_{ij}}}\right) \otimes \left(\sum_{\ell=1}^k\alpha_{i\ell}\overline{\ket{v_{i\ell}}}\bra{w_{i\ell}}\right).
	\end{align*}
	If we define $A_i := \sum_{j=1}^k \alpha_{ij}\ket{v_{ij}}\overline{\bra{w_{ij}}}$ then we have $L(\rho) = \sum_i p_i A_i \otimes \overline{A_i}$, where ${\rm rank}(A_i) \leq k$ and $\|A_i\|_F = 1$ for all $i$. In particular then, we have $L(\rho) = \sum_i p_i B_i$, where ${\rm rank}(B_i) \leq k^2$ and $\|B_i\|_F = 1$ for all $i$. Let $\mnorm{\cdot}$ be a norm with the property that $\mnorm{X} = \|X\|_F$ for all $X$ with ${\rm rank}(X) \leq k^2$. Then
\begin{align*}
	\bmnorm{L(\rho)} = \mnorm{\sum_i p_i B_i} \leq \sum_i p_i \mnorm{B_i} = \sum_i p_i \big\|B_i\big\|_F = \sum_i p_i = 1.
\end{align*}
All that remains is to make a suitable choice for $\mnorm{\cdot}$, so that this test for Schmidt number is as strong as possible. To this end, notice that $\|\cdot\|_{(k^2,2)}$ is clearly the smallest norm with the required rank property. Also notice that, because the Frobenius norm is self-dual, $\|\cdot\|_{(k^2,2)}^{\circ}$ must satisfy the same rank property, and in particular must be the largest such matrix norm. We thus choose $\mnorm{\cdot} = \|\cdot\|_{(k^2,2)}^{\circ}$, which completes the proof.
\end{proof}

Notice that when $k = 1$, $\|\cdot\|_{(k^2,2)}^{\circ} = \|\cdot\|_{tr}$, so Theorem~\ref{thm:gen_realign} gives the standard realignment criterion in this case. On the other extreme, if $k = \min\{m,n\}$ then $\|\cdot\|_{(k^2,2)}^{\circ} = \|\cdot\|_{F}$. Because $L$ preserves the Frobenius norm, Theorem~\ref{thm:gen_realign} then simply says that $\|\rho\|_{F} \leq 1$ for all quantum states $\rho$, which is trivially true because $\|\rho\|_{F} \leq \|\rho\|_{tr} = 1$. The conditions given for the remaining values of $k$ are all non-trivial, yet easy to compute.

To help motivate the idea that Theorem~\ref{thm:gen_realign} provides the ``right'' generalization of the realignment criterion, we now note that it provides a test that is both necessary and sufficient on pure states.
\begin{thm}
	Let $\ket{v} \in \bb{C}^m \otimes \bb{C}^n$. Then $SR(\ket{v}) \leq k$ if and only if $\|L(\ketbra{v}{v})\|_{(k^2,2)}^{\circ} \leq 1$.
\end{thm}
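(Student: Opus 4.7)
The plan is to handle the two implications separately. The $(\Leftarrow)$ direction is immediate from Theorem~\ref{thm:gen_realign}, since a pure state with Schmidt rank at most $k$ has Schmidt number at most $k$ as a density matrix. The real work lies in the contrapositive of $(\Rightarrow)$: assuming $SR(\ket{v}) = r > k$, I will construct an explicit lower bound forcing $\|L(\ketbra{v}{v})\|_{(k^2,2)}^{\circ} > 1$.

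First I would unpack $L(\ketbra{v}{v})$ by writing $\ket{v} = \sum_{i=1}^r \alpha_i \ket{a_i}\otimes\ket{b_i}$ in Schmidt form and reusing the computation from the proof of Theorem~\ref{thm:gen_realign}. This yields $L(\ketbra{v}{v}) = A \otimes \overline{A}$, where $A := \sum_{i=1}^r \alpha_i \ket{a_i}\overline{\bra{b_i}}$ has singular values $\alpha_1,\ldots,\alpha_r$. In particular, $L(\ketbra{v}{v})$ has exactly $r^2 > k^2$ nonzero singular values. Because $L$ is just a reshuffle of matrix entries it preserves the Frobenius norm, so $\|L(\ketbra{v}{v})\|_F = \|\ketbra{v}{v}\|_F = 1$. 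Comparing these two facts against the identity $\|X\|_{(k^2,2)}^2 = \sum_{i=1}^{k^2}\sigma_i^2$ from Theorem~\ref{thm:2k_dual} gives the strict inequality $\|L(\ketbra{v}{v})\|_{(k^2,2)} < 1$, since the trailing $r^2 - k^2$ positive singular values are omitted from the $(k^2,2)$-norm but contribute to the Frobenius norm.

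The final step is to convert this gap into a lower bound on the dual norm. Using $Y := L(\ketbra{v}{v})/\|L(\ketbra{v}{v})\|_{(k^2,2)}$, which satisfies $\|Y\|_{(k^2,2)} = 1$, as a test element in the defining supremum~\eqref{eq:dual_defn} produces
\begin{align*}
	\|L(\ketbra{v}{v})\|_{(k^2,2)}^{\circ} \geq \langle L(\ketbra{v}{v}), Y \rangle = \frac{\|L(\ketbra{v}{v})\|_F^2}{\|L(\ketbra{v}{v})\|_{(k^2,2)}} = \frac{1}{\|L(\ketbra{v}{v})\|_{(k^2,2)}} > 1,
\end{align*}
completing the contrapositive. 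No step is particularly tricky; the only moment requiring a bit of taste is recognizing that one should test against a rescaling of $L(\ketbra{v}{v})$ itself. This sidesteps any need to compute the dual norm from the more delicate singular-value formula of Theorem~\ref{thm:2k_dual}, and cleanly leverages the fact that the two unrelated-looking hypotheses ``$\|\cdot\|_F = 1$'' and ``more than $k^2$ nonzero singular values'' together force $\|\cdot\|_{(k^2,2)}^{\circ} > 1$.
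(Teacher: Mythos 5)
Your proof is correct, and its second half takes a genuinely different route from the paper's. For the forward direction both arguments simply invoke Theorem~\ref{thm:gen_realign}. For the converse, the paper also reduces to showing $\|L(\ketbra{v}{v})\|_{(k^2,2)}^{\circ} > 1$ whenever $SR(\ket{v}) > k$, but it proceeds via the \emph{infimum} characterization of the dual norm from Theorem~\ref{thm:dual_charac_general}: it extracts an optimal decomposition $L(\ketbra{v}{v}) = \sum_i c_i Y_i$ into rank-$\leq k^2$, Frobenius-norm-$\leq 1$ pieces (which requires a closedness argument to guarantee attainment) and then analyzes the equality case of the triangle inequality for the Frobenius norm to reach a rank contradiction. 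You instead work directly with the defining supremum~\eqref{eq:dual_defn}: testing against $L(\ketbra{v}{v})/\|L(\ketbra{v}{v})\|_{(k^2,2)}$ gives the general bound $\|X\|_{(k^2,2)}\,\|X\|_{(k^2,2)}^{\circ} \geq \langle X, X\rangle = \|X\|_F^2$, and you combine this with the explicit formula $\|X\|_{(k^2,2)} = \sqrt{\sum_{i=1}^{k^2}\sigma_i^2}$ from Theorem~\ref{thm:2k_dual} and the observation that $L(\ketbra{v}{v}) = A \otimes \overline{A}$ has exactly $SR(\ket{v})^2 > k^2$ nonzero singular values while $\|L(\ketbra{v}{v})\|_F = 1$, forcing $\|L(\ketbra{v}{v})\|_{(k^2,2)} < 1$ strictly. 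All of these steps check out. Your route is arguably cleaner, since it avoids both the attainment issue and the equality-case analysis of the triangle inequality; the trade-off is that it leans on the singular-value formula for $\|\cdot\|_{(k^2,2)}$, which the paper's argument never needs (it only uses the rank/Frobenius description of the unit ball). Both are complete proofs.
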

\begin{proof}
	The ``only if'' implication is provided by Theorem~\ref{thm:gen_realign}. For the ``if'' direction, write
	\begin{align*}
		\ket{v} = \sum_{i=1}^{SR(\ket{v})} \alpha_i \ket{v_i} \otimes \ket{w_i}.
	\end{align*}
	Then
	\begin{align*}
		L(\ketbra{v}{v}) & = \left(\sum_{i=1}^{SR(\ket{v})} \alpha_{i}\ket{v_{i}}\overline{\bra{w_{i}}} \right) \otimes \left(\sum_{i=1}^{SR(\ket{v})} \alpha_{i}\overline{\ket{v_{i}}}\bra{w_{i}} \right).
	\end{align*}
	In particular, if $SR(\ket{v}) > k$ then ${\rm rank}(L(\ketbra{v}{v})) > k^2$ and $\|L(\ketbra{v}{v})\|_F = 1$. We now prove by contradiction that if $SR(\ket{v}) > k$ then $\|L(\ketbra{v}{v})\|_{(k^2,2)}^{\circ} > 1$. Begin by using Theorem~\ref{thm:dual_charac_general} with the norm $\|\cdot\|_{(k^2,2)}$ to see that
	\begin{align*}
		\big\|X\big\|_{(k^2,2)}^{\circ} & = \inf\Big\{ \sum_i |c_i| : X = \sum_i c_i Y_i \text{ with ${\rm rank}(Y_i) \leq k^2$ and $\|Y_i\|_F \leq 1$ for all $i$} \Big\}.
	\end{align*}
	Now assume that $\|L(\ketbra{v}{v})\|_{(k^2,2)}^{\circ} \leq 1$ so that (by closedness of the set $\{Y : {\rm rank}(Y) \leq k^2, \|Y\|_F \leq 1\}$) there exists a decomposition $L(\ketbra{v}{v}) = \sum_i c_i Y_i$ with $\sum_j |c_j| \leq 1$, ${\rm rank}(Y_i) \leq k^2$, and $\|Y_i\|_F \leq 1$ for all $i$. Then
	\begin{align*}
		\big\|L(\ketbra{v}{v})\big\|_F = \left\|\sum_i c_i Y_i\right\|_F \leq \sum_i |c_i| \leq 1.
	\end{align*}
	Since we already saw that $\|L(\ketbra{v}{v})\|_F = 1$, the inequalities above must actually be equality. However, the first inequality is simply the triangle inequality, and equality is attained in the triangle inequality for the Frobenius norm if and only if the span set $\{Y_i\}$ has dimension $1$ (i.e., if and only if each of the $Y_i$'s are multiples of each other). However, since ${\rm rank}(Y_i) \leq k^2$ for all $i$, we then have ${\rm rank}(L(\ketbra{v}{v})) \leq k^2$ as well, which contradicts the fact that ${\rm rank}(L(\ketbra{v}{v})) > k^2$, as we already saw. We thus conclude that $\|L(\ketbra{v}{v})\|_{(k^2,2)}^{\circ} > 1$, which completes the proof.
\end{proof}

Notice that $\|\cdot\|_{(k^2,2)} \leq k\|\cdot\|$, so $\|\cdot\|_{tr} \leq k\|\cdot\|_{(k^2,2)}^\circ$. By combining this observation with Theorem~\ref{thm:gen_realign}, we arrive at a weaker generalization of the realignment criterion that says $\|L(\rho)\|_{tr} \leq k$ whenever $SN(\rho) \leq k$.

We close by noting that Theorem~\ref{thm:gen_realign} can be strenghtened further by using the local filtering technique described in \cite{GGHE08}. In that paper, it was noted that we can apply a local filtering operation to $\rho$ that does not change its Schmidt number, yet makes it ``more entangled'' in the sense that it is more susceptible to being detected by separability criteria. In particular, \cite[Proposition~IV.13]{GGHE08} (i.e., the \emph{filter covariance matrix criterion}) is the statement that results from first applying a local filter to $\rho$ and then applying the realignment criteria. One can similarly strengthen Theorem~\ref{thm:gen_realign} by first applying a local filter to $\rho$ and then using the statement of the theorem.

%%%%%%%%%%%%%%%%%%%%%%%%%%%%%%%%%%%%%%%%
%\section{Other Consequences of Duality}
%%%%%%%%%%%%%%%%%%%%%%%%%%%%%%%%%%%%%%%%
%
%We now present a handful of miscellaneous other consequences of the duality given by Theorem~\ref{thm:dual_norm_sk}.
%\begin{thm}\label{thm:sk_dual_equiv}
%	Fix positive integers $h \leq k$. Then
%  \begin{align*}
%  	\frac{h}{k} \| \cdot \|_{\gamma,h} \leq \| \cdot \|_{\gamma,k} \leq \| \cdot \|_{\gamma,h}.
%  \end{align*}
%\end{thm}
%\begin{proof}
%	Simply use property~\eqref{eq:dual_ineq} together with \cite[Theorem~4.13]{JK10}, which says that
%	\begin{align*}
%		\|\cdot\|_{S(h)} \leq \|\cdot\|_{S(k)} \leq \frac{k}{h}\|\cdot\|_{S(h)}.
%	\end{align*}
%\end{proof}

\vspace{0.1in} \noindent{\bf Acknowledgements.} N.J. was supported by the University of Guelph Brock Scholarship, NSERC of Canada, and the Mprime Network. D.W.K. was supported by NSERC Discovery Grant 400160, NSERC Discovery Accelerator Supplement 400233, and Ontario Early Researcher Award 048142. This paper makes up part of the first author's doctoral thesis.

\bibliographystyle{alpha}
\bibliography{quantum}
\end{document}